\newif\ifblog
\newif\iftex
\def\em{\it}
\def\emph#1{\textit{#1}}
\newtheorem{theorem}{Theorem}
\newtheorem{lemma}[theorem]{Lemma}
\newtheorem{proposition}[theorem]{Proposition}
\newtheorem{example}{Example}
\newtheorem{remark}[theorem]{Remark}
\newenvironment{proof}{\noindent {\sc Proof:}}{$\Box$} 
\newcommand{\ep}{\varepsilon}
\newcommand{\p}{\partial}
\newcommand{\al}{\alpha}
\newcommand{\la}{\lambda}
\newcommand{\de}{\delta}
\newcommand{\be}{\begin{eqnarray}}
\newcommand{\ee}{\end{eqnarray}}
\newcommand{\bee}{\begin{eqnarray*}}
\newcommand{\eee}{\end{eqnarray*}}
\title{Indifference Pricing of American Option
  Underlying Illiquid Stock under Exponential Forward
  Performance}
\author{Xiaoshan Chen \thanks{Department of Mathematics,
    South China Normal University, {\tt
      sunscnu@126.com}. } \and
  Qingshuo Song \thanks{Department of Mathematics, City University of
    Hong Kong, {\tt songe.qingshuo@cityu.edu.hk}, The research of
    this author is
    supported in part by the Research Grants Council of Hong Kong
    No. CityU 104007. }
  \and Fahuai Yi   \thanks{Department of Mathematics,
    South China Normal University, {\tt
        fhyi@scnu.edu.cn}. } \and
  \and George Yin  \thanks{Department of
Mathematics, Wayne State University, Detroit, Michigan 48202, {\tt
gyin@math.wayne.edu}. Research of this author was supported in
part by the National Science Foundation under DMS-0907753, and in
part by the Air Force Office of Scientific Research under
FA9550-10-1-0210. }
}
\begin{document}
\date{}
\maketitle
\begin{abstract}
  This work focuses on  the
  indifference pricing of American call option
  underlying a non-traded stock, which
  may be partially hedgeable by another
  traded stock. Under the
  exponential forward measure,
  the indifference price is formulated
  as a stochastic singular
  control problem.
  The value function is characterized as the
  unique solution of a partial differential equation in
  a Sobolev space.
  Together with some regularities and estimates
  of the value function,
  the existence of the optimal strategy
  is also obtained.
  The applications of the characterization result
  includes a derivation of a dual representation
  and the indifference pricing on
  employee stock option.
  As a byproduct, a generalized
  It\^o's formula is obtained
  for functions in a Sobolev space.

\vskip 0.2 true in \noindent {\bf Keyword:} Stochastic control,
generalized verification theorem, portfolio optimization,
generalized It\^o's formula, indifference pricing, exponential
forward performance.

\vskip 0.2 true in \noindent {\bf AMS Subject Classification
Number.} 91G80, 93E20, 60J60.

\end{abstract}

\newpage

\section{Introduction}

Recent developments of indifference pricing, particularly the
valuation of American options under forward performance measures,
have attracted much attention; see \cite{LSZ11} and the references
therein. One of many applications of the indifference pricing is the
valuation of Employee Stock Option (ESO); see \cite{LS09, LS09b}. An
ESO is an American type of call option on the common stock of a
company, which is not available for trade in the market. However,
there may be some other partially correlated stocks (ex. Index)
available in the market, which may be used for the purpose of
partial hedge of ESO. Such a problem can be formulated as a
stochastic singular control problem \cite{MZ09} with optimal
stopping. Related study also appears in empirical research
\cite{BBL05, HL96} for the study of early exercise behavior.

In this paper, we
study the indifference pricing of the American call options
underlying a non-traded common stock
by assuming that there exists another perfectly liquid stock
available in the market, which may be used for the purpose of
partial hedge. The
indifference pricing involved is associated to a portfolio
optimization problem, which is in fact a stochastic control problem
with optimal stopping involved. Using the classical verification
theorem \cite{YZ99}, the value function may be characterized as the
unique solution of
an associated variational inequality (VI) under rather strong
assumptions. These assumptions are: (1) The VI has the unique
classical solution, i.e., unique solvability in $C^{2,2,1}$; (2) The
control problem has an underlying process and associated control
process that attains the  optimal value, i.e., existence of the
optimal pair $(\pi^*, \tau^*)$. We refer to \cite{FS06, YZ99}  for
the verification theorem in general control theory, and \cite{LSZ11}
for the case of indifference pricing of American options.

In the general setting of control problems,
the aforementioned conditions for the verification theorem are
difficult to
verify due to
the full non-linearity. As an alternative, one can use the viscosity
solution approach \cite{CIL92} to identify the value function. In
particular, \cite{OZ03} characterizes the value function $V$ of an
investment problem in the framework of indifference pricing. A
drawback of this approach is that due to the lack of information on
the regularity for the viscosity solution, one cannot obtain further
knowledge of optimal controls. Even the existence of the optimal
control is problematic.

In this work, we will adopt an intermediate methodology between the
classical verification theorem and the viscosity solution approach.
First, we show the verification theorem holds under weaker
assumptions than the classical verification theorem needs, but
stronger than the viscosity solution needs. For the purpose of
complete characterization, we obtain some regularity estimates of
the  partial differential equation (PDE), which verifies all the
necessary assumptions of the existence of the PDE solution and
associated optimal pair. Thanks to the above results on control
problem, one can further develop more useful results: They are the
regularity of optimal exercise boundary, the dual representation,
and monotonicity of the value function on the system parameters. To
be more precise, the idea is illustrated below.

In the standard arguments of the verification theorem, one uses
It\^o formula
and classical solutions of a variation inequality (VI).
However, it is too restrictive, since the VI in our case
does not admit a classical solution. Therefore, we replace the
assumption on the existence of classical solution by solutions in
Sobolev space $W_{p,loc}^{2,2,1}$  for large enough $p$. Then, the
question is
\begin{itemize}
\item  How large $p$ is enough to
enable us to use It\^o formula?
\end{itemize}
We will demonstrate that
$p=3$ is large enough. As a by product, we generalize It\^o's
formula given in \cite{Kry80} from one-side inequality to two-sided
equality; see the proof in the appendix, which may be potentially
useful in other related control problems.

Moreover, utilizing special structure with the exponential forward
performance, one can derive the verification for the price $P$
associated to a PDE.
Note that another assumption needed for the verification theorem,
namely the existence of optimal pair,
can be reduced to an estimation on the first derivative. As a
result, the verification theorem holds under weaker assumptions on
solvability (H1) and estimates (H2) given at the end of
Section~\ref{sec:vt}. To fully characterize the control problem, we
answer the questions:
\begin{itemize}
\item Does solvability (H1) and estimation (H2) hold for the
  VI associated to the price $P$?
\end{itemize}
To proceed, we start with a simple transformation on the backward
equation (in time), which leads to an equivalent counterpart of
forward equation on standard domain $\mathcal{Q} =
\mathbb{R}\times(0,T]$. Since the original domain is unbounded, we
shall use penalized method on the truncated version. As a result,
the truncated PDE after penalization is quasi-linear, and
Leray-Schauder fixed point theorem provides the solvability in
$W^{2,1}_p$, and the comparison principle is standard for the strong
solution, see more details in Section~\ref{sec:main}. By forcing the
limit of the parameter of the penalized function $\varepsilon\to 0$,
and the parameter of the truncated domain $N\to \infty$, we
use local estimate to obtain the $C^{\al,\al/2}$ estimate (De
Giorgi-Nash-Moser estimate) and  $W^{2,1}_p$ estimate on compact
domain by removing out singularity point $(\ln K, 0)$.
Consequently, we obtain  the existence of
$W^{2,1}_{p,loc}(\mathcal{Q}) \cap C(\overline{\mathcal{Q}})$
solution with some regularity estimates on the first order
derivatives, which eventually resolves (H1)-(H2). To this end, we
complete the characterization, which is summarized in
Theorem~\ref{thm:VP}.

One application of the main result is that, a dual representation
Proposition~\ref{prop:dual} is proved in the framework of stochastic
control theory. A simple derivation from this representation implies
several economically interesting facts: sensitivity properties
with respect to risk aversion $\gamma$ and some other parameters,
and sublinear property with respect to the payoff, etc. Another
application is the indifference pricing of ESOs, where vesting
periods and job termination risk is involved. Thanks to the
characterization
theorem, the price can be characterized as the classical solution of
its associated PDE.

The rest of this paper is outlined as follows. The precise
formulation of the problem is given in Section \ref{sec:form}.
Section \ref{sec:ver} proceeds with a generalized verification
theorem. Section~\ref{sec:main} presents the main results that gives
a complete characterization. Section \ref{sec:dual} is concerned
with the dual representation and other properties. Section
\ref{sec:appl} demonstrates application to ESO costs. Finally, an
appendix containing a few technical results is provided.

\section{Problem Formulation}\label{sec:form}

Let $(\Omega, \mathcal {F}, \mathbb{P})$ be a filtered probability
space with a filtration $(\mathcal {F}_t)_{t \geq 0}$ that satisfies
the usual conditions, and the processes $B$ and $\widetilde{B}$
be two independent standard  Brownian motions. Let
$\mathcal{T}_{t,T}$ be the set of all stopping times in the interval
$[t,T]$.

We consider a financial market with
a risk-free interest rate, which consists of a non-traded stock
issued by a firm with its price $Y^{t,y}_\nu$ at time $\nu \in
[t,T]$ given by
\begin{equation}
  \label{eq:Y}
  dY_\nu =b Y_\nu d \nu + c Y_\nu (\rho dB_\nu +
  \sqrt{1-\rho^2}d\widetilde{B}_\nu), \ Y_t = y,
\end{equation}
and one liquidly traded stock with its discounted price
$S^{t,s}_\nu$  at time $\nu \in [t,T]$
\begin{equation}
  \label{eq:stkS}
   dS_\nu=S_\nu \sigma (\lambda d \nu + d B_\nu), \
S_t = s.
\end{equation}

The payoff of American call option with strike $K>0$ and maturity
$T$ underlying non-traded asset $Y_\nu^{t,y}$ is $g(Y_\tau)$, where
$\tau$ is the
exercise time $\tau \in \mathcal{T}_{t,T}$ and $g$ is given by a
function given by
$$g(y) = (y- K)^+.$$

If an employee, with initial capital $w$, dynamically trades in the
stock $S$, then her  wealth process $W^{t,w,\pi}$ under
self-financing rule
satisfying
\begin{equation}
  \label{eq:W}
  dW^\pi_\nu = \pi_\nu\sigma(\lambda d\nu +dB_\nu), \ W(t) = w,
\end{equation}
where $\pi_\nu$ represents the  cash amount invested in the liquid
stock $S^{t,s}_\nu$. We assume the strategy $\pi$ belongs to
$\mathcal{Z}_{t}$, which is the set of all
progressively measurable processes $\pi:[t,T]\times \Omega \to
\mathbb{R}$ satisfying integrability condition
\begin{equation}
  \label{eq:pic}
  \mathbb{E}\Big[\int_t^T \pi_\nu^2 d\nu \Big] <\infty.
\end{equation}

We introduce an exponential forward performance measure given by
\begin{equation}
  \label{eq:fwdp}
  U_t(w) = - e^{-\gamma w + \frac 1 2 \lambda^2 t};
\end{equation}
see \cite[Theorem 3 and (18)]{LSZ11}. Under the above performance
measure, we compare the following two scenarios for an employee
holding the initial  capital $w$ and a unit of American call at the
initial time $t$:
\begin{enumerate}
\item Find a stopping time
  $\tau \in \mathcal{T}_{t,T}$ to exercise
  the call option, while dynamically trading
  the capital $w$ in liquid stock $S_\nu$, to maximize
  its performance, and find the corresponding value
  \begin{equation}
    \label{eq:V}
    V(w,y,t)= \sup_{\tau\in\mathcal {T}_{t,T}, \pi \in \mathcal
      {Z}_{t,\tau}} \mathbb{E} [U_\tau(W^{t,w,\pi}_\tau +
    g(Y^{t,y}_\tau))| \mathcal{F}_t].
  \end{equation}
\item
Receive a cash payment of amount $P(w,y,t)$
  by selling one unit of
  call at time $t$, the performance corresponding
  to the total  capital $w+P(w,y,t)$ is
  $U_t(w+ P(w,y,t))$.
\end{enumerate}

Now, we are ready to define the indifference price of the American
call option. The  indifference price of this call is the
 cash payment $P(w,y,t)$, which makes the above two
scenarios indifferent with respect to the given forward performance
of \eqref{eq:fwdp}. Therefore, the price $P(w,y,t)$ is the value
satisfying
\begin{equation}
  \label{eq:fwdpr}
  V(w,y,t) = U_t(w + P(w,y,t)).
\end{equation}

To determine the above indifference price of the American call, we
consider the following two problems:
\begin{enumerate}
\item[(Q1)] Find the indifference price
using the identity $P(w,y,t) = U_t^{-1}(V(w,y,t)) - w$,
  after solving the stochastic control optimization \eqref{eq:V},
  i.e.,
  \begin{enumerate}
  \item characterize the value function $V$ of \eqref{eq:V};
  \item characterize the pair of optimal strategy $\tau^*$ and $\pi^*$
    if it exists.
  \end{enumerate}
\item[(Q2)] Characterize directly the indifference price
   $P(w,y,t)$ as the unique solution of certain PDE.
\end{enumerate}
To proceed,  the following assumptions are imposed throughout this
paper:
\begin{enumerate}
\item [(A1)] Assume $\sigma$, $\gamma>0$, and $\rho\in [0,1)$.
\end{enumerate}
We exclude the case $\rho=1$, where the market is complete and the
problem of
indifference price can be reduced to
a Black-Scholes model.

\section{
Generalized Verification Theorem}\label{sec:ver} \label{sec:vt} We
present a version of the verification theorem for the value function
$V$ of \eqref{eq:V} and $P$ of \eqref{eq:fwdpr} in this subsection,
which generalize the  verification theorem from the classical PDE
solution to the strong solution in Sobolev function space with
certain regularity.

Define a three dimensional domain by $\mathcal{Q}^1 = \mathbb{R}
\times \mathbb{R}^+ \times [0,T)$, and denote by
$W^{2,2,1}_{p,loc}(\mathcal{Q}^1)$,
the collection of all functions $\mathcal{Q}^1 \ni (w,y,t) \mapsto
\varphi(w,y,t)$ having $(\partial_{ww} \varphi, \partial_{yy}
\varphi,
\partial_{t} \varphi)$ in distribution sense
and  integrable in any compact subset of  $\mathcal{Q}^1$. For a
scalar $\pi \in \mathbb{R}$, define a parameterized differential
operator $\mathcal{L}_1^\pi: W^{2,2,1}_{p,loc} \mapsto \mathbb{R}$
as
$$\mathcal{L}_1^\pi  \varphi (w, y, t) = \frac 1 2 c^2
y^2 \partial_{yy} \varphi + b y \partial_y \varphi + \frac{1}{2}
\sigma^2 \pi^2 \partial_{ww} \varphi + \pi (\rho\sigma c y
\partial_{wy} \varphi+\lambda \sigma \partial_w \varphi).$$
We also need to introduce   the {\it continuation region } by
\begin{equation}
  \label{eq:Q1}
  \mathcal{C}[v] = \{(w,y,t) \in \mathcal{Q}^1 : v(w,y,t) > U_t(w +
  g(y))\}.
\end{equation}

\begin{lemma} [Verification Theorem for
Function $V$]
  \label{lem:veri}
  Suppose there exists $v\in W_{p,loc}^{2,2,1}( \mathcal{Q}^1)$ for
  some $p\ge 3$, satisfying
  \begin{equation}
    \label{eq:pdeV}
    \min\big\{-\partial_t v + \inf_{\pi \in \mathbb{R}} \{ -
    \mathcal{L}_1^\pi v(w,y,t)\},
    v(w,y,t) - U_t(w + g(y))\big\} = 0,
  \end{equation}
  with the terminal condition
  \begin{equation}
    \label{eq:tcV}
    v(w,y,T) = U_T(w + g(y)).
  \end{equation}
  Then, $v(w,y,t) \ge V(w,y,t)$.
  If, in addition, there exists a
  pair $(W^*, \pi^*)$ satisfying \eqref{eq:W},
  \eqref{eq:pic}, and
  \begin{equation}
    \label{eq:pis}
    (\partial_t v + \mathcal{L}_1^{\pi^*_\nu} v) (W^*_\nu, Y_\nu, \nu)
    =  0, \ \forall\ t<\nu<\tau^*,
  \end{equation}
  where
  \begin{equation}
    \label{eq:taus}
    \tau^* := \inf \{ \nu>t: (W_\nu^*, Y_\nu, \nu) \notin
    \mathcal{C}[v]\}\wedge T.
  \end{equation}
  Then the variational inequality \eqref{eq:pdeV}-\eqref{eq:tcV}
  admits a unique solution in $W_{p,loc}^{2,2,1}(\mathcal{Q}^1)$,
  and $v(w,y,t) = V(w,y,t)$.
\end{lemma}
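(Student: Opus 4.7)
The plan is to apply a generalized Itô formula to $v$ along the controlled state, use the variational inequality \eqref{eq:pdeV} to sign the drift, and then compare against $U_\tau(W^\pi_\tau + g(Y_\tau))$ to obtain the two-sided bound. The main technical difficulty is that $v$ is only in the Sobolev space $W^{2,2,1}_{p,loc}$, not $C^{2,2,1}$, so the classical verification argument has to be patched at the Itô step.

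For the upper bound $v\ge V$, I would fix $(w,y,t)\in\mathcal{Q}^1$, any admissible $\pi\in\mathcal{Z}_{t}$, and any $\tau\in\mathcal{T}_{t,T}$, and then localize via
\begin{equation*}
\tau_n = \tau \wedge \inf\{\nu>t : |W^\pi_\nu| + Y_\nu + 1/Y_\nu \ge n\},
\end{equation*}
so that the trajectory stays in a compact subregion of $\mathcal{Q}^1$ on which $v\in W^{2,2,1}_p$ with $p\ge 3$. Because the spatial state $(w,y)$ is two-dimensional, the threshold $p=3$ matches exactly the regime covered by the two-sided generalized Itô formula proved in the appendix, yielding
\begin{equation*}
v(w,y,t) = \mathbb{E}\big[v(W^\pi_{\tau_n}, Y_{\tau_n}, \tau_n)\big] - \mathbb{E}\Big[\int_t^{\tau_n} (\partial_t v + \mathcal{L}_1^{\pi_\nu} v)(W^\pi_\nu, Y_\nu, \nu)\, d\nu\Big].
\end{equation*}
The variational inequality \eqref{eq:pdeV} gives $\partial_t v + \mathcal{L}_1^{\pi_\nu} v \le 0$ almost everywhere, while its obstacle side gives $v \ge U_{\cdot}(\cdot + g(\cdot))$; together these produce
\begin{equation*}
v(w,y,t) \ge \mathbb{E}\big[U_{\tau_n}(W^\pi_{\tau_n} + g(Y_{\tau_n}))\big].
\end{equation*}
Since $U_\nu \le 0$, passing $n \to \infty$ via Fatou's lemma requires only one-sided control, and then taking the supremum over $(\pi,\tau)$ delivers $v \ge V$.

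For the reverse inequality under the existence of $(W^*,\pi^*)$, I would apply the same Itô identity along $(W^*, Y, \cdot)$ up to $\tau^* \wedge \tau_n$. By \eqref{eq:pis} the drift vanishes identically on $[t,\tau^*]$, so the integral term disappears and
\begin{equation*}
v(w,y,t) = \mathbb{E}\big[v(W^*_{\tau^* \wedge \tau_n}, Y_{\tau^* \wedge \tau_n}, \tau^* \wedge \tau_n)\big].
\end{equation*}
By \eqref{eq:taus}, at $\tau^*$ the state lies in $\mathcal{Q}^1\setminus \mathcal{C}[v]$ where $v$ collapses to $U_{\cdot}(\cdot + g(\cdot))$; here I would use continuity of $v$ furnished by the parabolic Sobolev embedding $W^{2,2,1}_p \hookrightarrow C^{\alpha,\alpha,\alpha/2}$ for $p\ge 3$. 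Sending $n\to\infty$ yields $v(w,y,t)\le V(w,y,t)$, so $v=V$. Uniqueness in the stated class is then automatic: any two solutions of \eqref{eq:pdeV}--\eqref{eq:tcV} satisfying the additional optimal-pair hypothesis must each coincide with the value function $V$, hence with one another.

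The principal obstacle is the Itô step itself. Krylov's one-sided inequality for Sobolev functions is not sharp enough for verification, because we need the equality form in both directions to turn the VI into both the upper and the matching lower bound. This is precisely why the paper extends Krylov's result to a two-sided Itô formula in the appendix, and why the threshold $p\ge 3$ (matching the $2+1$-dimensional parabolic scale) is exactly what is required. A secondary technicality is the limit $n\to\infty$; the negativity of $U$, the integrability \eqref{eq:pic} of $\pi$, and standard moment estimates on the geometric Brownian motion $Y$ are enough to justify the passage to the limit in both halves of the argument.
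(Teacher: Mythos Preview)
Your proposal is correct and follows essentially the same route as the paper: apply the generalized It\^o formula of the appendix (valid for $W^{2,2,1}_{p,loc}$ with $p\ge d+1=3$) to $v$ along the controlled state, use the two halves of the variational inequality \eqref{eq:pdeV} to sign the drift and bound the stopped value by the obstacle, and then run the same argument along $(W^*,\pi^*,\tau^*)$ with vanishing drift for the reverse inequality. The paper's proof is terser and applies the It\^o identity directly up to $\tau$ without writing out the localization $\tau_n$ and the Fatou/limit passage; your version makes these steps explicit, which is a welcome refinement rather than a different strategy.
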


\begin{proof}
  Fix an arbitrary stopping time $\tau\in\mathcal {T}_{t,T}$ and an
  admissible control $\pi \in \mathcal {Z}_{t,\tau}$, and denote $(W,
  Y) := (W^{t,w,\pi}, Y^{t,y})$ and $\mathbb{E}_t[\ \cdot\ ] :=
  \mathbb{E}[\ \cdot\  | \mathcal{F}_t]$ for simplicity. By
  virtue of the
  generalized It\^o's formula (see Proposition~\ref{prop:Ito}),
  we have
  $$\mathbb{E}_t[v(W_\tau, Y_\tau, \tau)] = v(w,y,t) + \mathbb{E}_t
  \Big[ \int_t^\tau (\partial_t v + \mathcal{L}_1^\pi v) (W_\nu, Y_\nu,
  \nu) d\nu \Big].$$
  Applying \eqref{eq:pdeV},
  we have $v(w,y,t) \ge  U_t(w
  + g(y))$ for all $(w,y,t)$, hence
  $$\mathbb{E}_t[v(W_\tau, Y_\tau, \tau)] \ge
  \mathbb{E}_t[U_\tau(W_\tau + g(Y_\tau))].$$
  Note that \eqref{eq:pdeV} also implies
  $(\partial_t v + \mathcal{L}_1^\pi v) (w,y,t) \le 0$ for
  all $(w,y,t)$,
  which yields
  $$ \mathbb{E}_t
  \Big[ \int_t^\tau (\partial_t v + \mathcal{L}_1^\pi v) (W_\nu, Y_\nu,
  \nu) d\nu \Big] \le 0.$$
  Therefore, we conclude
  $$\mathbb{E}_t[U_\tau(W_\tau + g(Y_\tau))]
  \le  v(w,y,t),$$
  and arbitrariness of $\tau$ and $\pi$ further
  implies one-sided inequality
  $$v(w,y,t) \ge V(w,y,t).$$
  Furthermore, if
  \eqref{eq:pis} holds for some $(W^*, \pi^*)$, It\^o's
  formula together with the definition of $\mathcal{C}[v]$
  gives the inequality of the opposite direction:
  $$v(w,y,t) = \mathbb{E}_t[v(W_{\tau^*}, Y_{\tau^*}, \tau^*)] =
  \mathbb{E}_t[U_{\tau^*}(W_{\tau^*} + g(Y_{\tau^*}))] \le
  V(w,y,t).$$
  This implies $v = V$.
\end{proof}

Next, we discuss the price $P$ of \eqref{eq:fwdpr}. To proceed, we
introduce a domain $\mathcal{Q}^2 = \mathbb{R}^+ \times [0,T)$ and
its Sobolev
space $W^{2,1}_{p,loc}(\mathcal{Q}^2)$. Also define a non-linear
differential operator $\mathcal{L}_2: W^{2,1}_{p,loc}
(\mathcal{Q}^2) \to \mathbb{R}$ by
$$\mathcal{L}_2 \varphi(y,t) = \frac{1}{2} c^2
y^2 \partial_{yy} \varphi + (by-\rho c \lambda y) \partial_y \varphi
- \frac{1}{2} \gamma (1-\rho^2) c^2 y^2 (\partial_y \varphi)^2.$$

\begin{lemma}[Verification Theorem for Function $P$]
  \label{lem:vP}
  Suppose there exists a function
  $f\in W^{2,1}_{p,loc}(\mathcal{Q}^2)$ for some $p\ge 3$,
  satisfying
  \begin{equation}
    \label{eq:pdeP}
    \min\{ -\partial_t f(y,t) - \mathcal{L}_2 f (y,t),
    f(y,t) -g(y)\} = 0, \ \ (y,t) \in \mathcal{Q}^2,
  \end{equation}
  with the terminal condition
  \begin{equation}
    \label{eq:tcP}
    f(y,T) = g(y), \ \ y\in \mathbb{R}^+,
  \end{equation}
  and $|\partial_y f(y,t)|$ is uniformly  bounded.
  Then, PDE \eqref{eq:pdeV} together with the terminal condition
   \eqref{eq:tcV} is
  uniquely solvable  in $W^{2,2,1}_{p,loc}(\mathcal{Q}^1)$,
  and there exists   a pair
  $(W^*, \pi^*)$ satisfying \eqref{eq:W},
  \eqref{eq:pic}, and \eqref{eq:pis}. As a result,
  the price function $P(w,y,t)$
  of \eqref{eq:fwdpr} is independent to
  the initial wealth $w$, and $P(w,y,t) = f(y,t)$.
\end{lemma}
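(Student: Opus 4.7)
The key observation is that under exponential forward performance, the value function \eqref{eq:V} should decouple across wealth, so the natural ansatz is
\[
v(w,y,t) := U_t(w + f(y,t)) = -\exp\bigl(-\gamma(w + f(y,t)) + \tfrac{1}{2}\lambda^2 t\bigr).
\]
The plan is to verify that this $v$ meets the PDE hypothesis of Lemma~\ref{lem:veri}, and then exhibit an optimal pair $(W^*,\pi^*)$ as required by the second half of that lemma. Once both are in hand, $v = V$, and the pricing identity \eqref{eq:fwdpr} forces $P(w,y,t) = U_t^{-1}(V) - w = f(y,t)$, independent of $w$ as claimed.

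First I would check $v \in W_{p,loc}^{2,2,1}(\mathcal{Q}^1)$. Direct differentiation yields $\partial_w v = -\gamma v$, $\partial_{ww} v = \gamma^2 v$, $\partial_y v = -\gamma(\partial_y f)v$, $\partial_{yy} v = \bigl(-\gamma\,\partial_{yy} f + \gamma^2(\partial_y f)^2\bigr)v$, $\partial_{wy} v = \gamma^2(\partial_y f)v$, and $\partial_t v = \bigl(-\gamma\,\partial_t f + \tfrac{1}{2}\lambda^2\bigr)v$. Since $f\in W_{p,loc}^{2,1}(\mathcal{Q}^2)$ and $|\partial_y f|$ is uniformly bounded, each of these distributional derivatives is locally $L^p$ on $\mathcal{Q}^1$, and the required regularity follows. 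The terminal condition \eqref{eq:tcV} is immediate from \eqref{eq:tcP}.

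Next comes the Hamiltonian reduction. Since $\partial_{ww} v = \gamma^2 v < 0$, the map $\pi \mapsto \mathcal{L}_1^\pi v$ is strictly concave in $\pi$, with unique maximizer
\[
\pi^*(y,t) = \frac{\lambda - \rho c\gamma y\,\partial_y f(y,t)}{\sigma\gamma}.
\]
Plugging this in and completing the square (the $\rho^2$ piece of $(\pi^*)^2$ combines with the $(\partial_y f)^2$ term from $\partial_{yy} v$ to produce exactly the nonlinearity $-\tfrac{\gamma}{2}(1-\rho^2)c^2 y^2 (\partial_y f)^2$ in $\mathcal{L}_2$, while the cross terms in $\lambda$ and $\rho c\lambda y\,\partial_y f$ rearrange into the drift $(by-\rho c\lambda y)\partial_y f$) one obtains the pointwise identity
\[
\partial_t v + \sup_{\pi\in\mathbb{R}} \mathcal{L}_1^\pi v = -\gamma v\,\bigl(\partial_t f + \mathcal{L}_2 f\bigr).
\]
Because $-\gamma v > 0$ and $U_t$ is strictly increasing, the two expressions in the VI \eqref{eq:pdeV}, namely $-\partial_t v + \inf_\pi(-\mathcal{L}_1^\pi v)$ and $v - U_t(w+g)$, have the same signs as $-\partial_t f - \mathcal{L}_2 f$ and $f - g$ respectively. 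The hypothesis \eqref{eq:pdeP} on $f$ therefore transfers verbatim to \eqref{eq:pdeV} for $v$.

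Finally I would produce the optimal pair. Set $\pi_\nu^* := \pi^*(Y_\nu,\nu)$ and let $W^*$ solve \eqref{eq:W} with this feedback. The uniform bound on $\partial_y f$ gives $|\pi_\nu^*| \le C(1 + Y_\nu)$, and the geometric Brownian motion $Y$ has all moments, so \eqref{eq:pic} holds and $\pi^* \in \mathcal{Z}_{t,T}$. Relation \eqref{eq:pis} is automatic because $\pi^*$ was defined as the pointwise Hamiltonian maximizer, while the continuation region $\mathcal{C}[v] = \{(w,y,t) : f(y,t) > g(y)\}$ is manifestly independent of $w$, so the stopping rule \eqref{eq:taus} reduces to $\tau^* = \inf\{\nu > t : f(Y_\nu,\nu) \le g(Y_\nu)\}\wedge T$. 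Lemma~\ref{lem:veri} then delivers both the $W_{p,loc}^{2,2,1}(\mathcal{Q}^1)$-uniqueness of the VI and the identity $v = V$, from which all the remaining claims follow. The only genuine analytic input here is the uniform bound $|\partial_y f| \le C$: it places $v$ in the correct Sobolev space and makes $\pi^*$ admissible, so I expect that hypothesis (rather than the bookkeeping of the Hamiltonian reduction) to be the real workhorse.
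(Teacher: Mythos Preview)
Your proposal is correct and follows essentially the same route as the paper: define $v(w,y,t)=U_t(w+f(y,t))$, compute its derivatives to see $\partial_{ww}v=\gamma^2 v<0$, extract the wealth-independent maximizer $\pi^*(y,t)=\tfrac{\lambda}{\sigma\gamma}-\tfrac{\rho c}{\sigma}\,y\,\partial_y f$, use the uniform bound on $\partial_y f$ to obtain admissibility via $\mathbb{E}\int Y_\nu^2\,d\nu<\infty$, and then invoke Lemma~\ref{lem:veri}. You are in fact more explicit than the paper about the Hamiltonian identity $\partial_t v+\sup_\pi\mathcal{L}_1^\pi v=-\gamma v\,(\partial_t f+\mathcal{L}_2 f)$ and about the $w$-independence of the continuation region, points the paper leaves to ``calculations by change of variables.''
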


\begin{proof}
  Let $v(w,y,t) = U_t(w + f(y,t))$. Then, we
  have $v\in W_{p,loc}^{2,2,1}(\mathcal{Q}^1)$.
  One can directly check
  $\partial_{ww} v = \gamma^2 v<0$ in
  $\mathcal{Q}^1$, and hence, the function
  $\pi^*[v]: \mathcal{Q}^1 \mapsto \mathbb{R}$ of the form
  $$\pi^*[v](w,y,t) = - \frac{\rho c y \partial_{wy}v (w,y,t)
    +\lambda \partial_w v(w,y,t)}{\sigma \partial_{ww} v(w,y,t)}$$
  is well defined. Note that, $\pi^*[v]$ is independent to the
  variable $w$, i.e., one can rewrite
  $$\pi^*[v](y,t) = - \frac{\rho c }{\sigma} \cdot
  y \partial_y f
  (y,t)  +  \frac{\lambda}{\sigma \gamma}.$$
  Now, we can define a process $\pi^*$ by
  \begin{equation}
    \label{eq:pistar}
    \pi^*_\nu = \pi^*[v](Y_\nu, \nu) =
    - \frac{\rho c }{\sigma} \cdot
    Y_\nu \partial_y f (Y_\nu, \nu) +
    \frac{\lambda}{\sigma \gamma}, \ \ \nu\in (t,T).
  \end{equation}
  Note that, $Y$ is the unique strong solution of
  \eqref{eq:Y}. Therefore, $\pi_\nu^*$ is an admissible strategy,
  since it satisfies the integrability
  condition \eqref{eq:pic} due to
  boundedness of $\partial_y f$, i.e.,
  $$\mathbb{E}\Big[\int_t^T (\pi^*_\nu)^2 d\nu \Big] \le C + C
  \mathbb{E} \Big[  \int_t^T Y_\nu^2 d\nu \Big] <\infty.$$
  Since the optimal strategy $\pi^*_\nu$ is independent to
  $W^*$, the solution to \eqref{eq:W} can be simply given by its
  integral form of \eqref{eq:W} . Calculations by change of variables
  leads to
  \eqref{eq:pdeV}-\eqref{eq:pis}. Thus, Lemma~\ref{lem:veri} together
  with definition \eqref{eq:fwdpr} implies
  $V(w,y,t) = U_t(w+ f(y,t))$, and $f(y,t) = P(w,y,t)$.
\end{proof}

\begin{remark}[The boundary condition of $f(y,t)$
at $y=0$]{\rm
  The verification theorem can be thought of the probability
  counterpart of uniqueness result
  of  PDE solution \eqref{eq:pdeP}-\eqref{eq:tcP}.
  From the above verification theorem,
  the uniqueness holds without
  the specification of the boundary condition
  on $y=0$. Similar observations are addressed in
  \cite{OR73} (Fichera condition) for   linear
  equation, and in   \cite{BSY11} for
  the fully non-linear parabolic equation without
  obstacle.}
\end{remark}

To this end, we summarize the
implication of Lemma~\ref{lem:veri} and Lemma~\ref{lem:vP}. The
above verification theorems give conditional characterization of the
control problem since they give the answers to (Q1)-(Q2) based on
the assumptions  on the solvability of PDE. More precisely,
\begin{enumerate}
\item For (Q1-a), $V$ is the unique $W^{2,2,1}_{p,loc}$
  solution of PDE \eqref{eq:pdeV}-\eqref{eq:tcV};
\item For (Q1-b), the pair of optimal control exists,
  and they may have representation of
  \eqref{eq:pistar} and \eqref{eq:taus}, respectively;
\item For (Q2), $P$ is invariant in variable $w$,
  and the unique $W^{2,1}_{p,loc}$ solution
  of PDE  \eqref{eq:pdeP}-\eqref{eq:tcP} as a function
  of two variables of $(y,t)$,
\end{enumerate}
provided that the following hypothesis are valid,
\begin{enumerate}
\item [(H1)]    \eqref{eq:pdeP}-\eqref{eq:tcP} is
  solvable in
  $W_{p,loc}^{2,1}(\mathcal{Q}^2)$;
\item [(H2)] $|\partial_y f(y,t)|$ is uniformly bounded.
\end{enumerate}

\section{Main Result: Complete Characterization}\label{sec:main}
\label{sec:pde} To obtain the complete characterization, it is
crucial to study the solability and its related estimations of PDE
\eqref{eq:pdeP}-\eqref{eq:tcP}.

For the convenience in the analysis, we
analyze the backward equation (in time)
\eqref{eq:pdeP}-\eqref{eq:tcP},
by studying its associated forward equation of the following form.

Let $x=\ln y,\; \theta =T-t,\;u(x, \theta) = f(y,t)$ in
\eqref{eq:pdeP}-\eqref{eq:tcP}, then $u(x,\theta)$ satisfies
\begin{equation}
  \label{eq:pdeu}
  \left\{\begin{array}{ll}
    \min\{(\partial_\theta u - \mathcal{L} u) (x,\theta), \;
    u-(e^x-K)^+\}=0, \; & (x,\theta)\in \mathcal{Q},
    \vspace{2mm}\\
    u(x,0)=(e^x-K)^+, & x\in\mathbb{R},
  \end{array} \right.
\end{equation}
where $\mathcal{Q} = \mathbb{R}\times(0,T]$ and $\mathcal{L}$ is the
differential operator given by
$$\mathcal{L} \varphi (x, t) = \frac{1}{2} c^2 \partial_{xx} \varphi + \Big(b -
\rho c \lambda - \frac{1}{2}c^2\Big)\partial_x \varphi - \frac{1}{2}
\gamma(1-\rho^2)c^2(\partial_x \varphi )^2.$$

\subsection{Solvability of Problem \eqref{eq:pdeu}}
Since $(-\infty,+\infty)\times(0,T]$ is unbounded, we first confine
our attention to the truncated version of \eqref{eq:pdeu} in a
finite domain $\mathcal{Q}_N=(-N,N)\times(0,T]$. Let $u^N(x,\theta)$
be the solution (if it exists) to the following problem
\be\label{3.1} \left\{
  \begin{array}{ll}
    \min\big\{\p_\theta u^N - \mathcal{L}u^N, \;
    u^N-(e^x-K)^+\big\}=0,
    \;&(x,\theta)\in \mathcal{Q}_N,
    \vspace{2mm}\\
    \p_xu^N(-N,\theta)=0,\;\p_xu^N(N,\theta)=e^N,&\theta\in(0,T],
    \vspace{2mm}\\
    u^N(x,0)=(e^x-K)^+,&x\in(-N,N).
  \end{array}
\right. \ee

In order to prove the existence of solution to
problem (\ref{3.1}), we construct a penalty approximation of
problem (\ref{3.1}). Suppose $u^N_\ep(x,\theta)$ satisfies
 \be\label{3.2}
\left\{
  \begin{array}{ll}
    \p_\theta u^N_\ep- \mathcal{L} u^N_\ep +
    \beta_\ep(u^N_\ep-\pi_\ep(e^x-K))=0,\;
    &(x,\theta)\in \mathcal{Q}_N,
    \vspace{2mm}\\
    \p_xu^N_\ep(-N,\theta)=0,\;\p_xu^N_\ep(N,\theta)=e^N,&\theta\in(0,T],
    \vspace{2mm}\\
    u^N_\ep(x,0)=\pi_\ep(e^x-K),&x\in(-N,N),
  \end{array}
\right. \ee where $\beta_\ep(t)$ (see Fig. 1.) satisfies
\begin{eqnarray*}
  &&\beta _{\varepsilon }(t)\in C^{2}(-\infty ,+\infty ),\;\;\beta
  _{\varepsilon
  }(t)\leq 0, \\
  &&\beta _{\varepsilon }^{\prime }(t)\geq 0,\;\;\beta
  _{\varepsilon }^{\prime \prime }(t)\leq 0,\;\;\beta_\ep(0)=-C_0,
\end{eqnarray*}
where $C_0>0$ is to be determined.
Note that
\[
\lim\limits_{\varepsilon \rightarrow 0}\beta _{\varepsilon
}(t)=\left\{
  \begin{array}{ll}
    0, & t>0,\vspace{2mm} \\
    -\infty , & t<0,%
  \end{array}
\right.
\]%
and that $\pi _{\varepsilon }(t)$ (see Fig. 2.) satisfies
\[
\pi _{\varepsilon }(t)=\left\{
  \begin{array}{ll}
    t, & t\geq \varepsilon, \vspace{2mm} \\
    \nearrow, & |t|\leq \varepsilon, \vspace{2mm} \\
    0, & t\leq -\varepsilon,%
  \end{array}%
\right.
\]%
and $\pi _{\varepsilon }(t)\in C^{\infty },\;\;0\leq \pi
_{\varepsilon }^{\prime }(t)\leq 1,\;\;\pi _{\varepsilon }^{\prime
\prime }(t)\geq 0,\;\;\lim\limits_{\varepsilon \rightarrow 0^+}\pi
_{\varepsilon }(t)=t^{+}$. \vspace{1.3in}
\begin{center}
  \begin{picture}(0,-200)
    \put(-150,80){\vector(1,0){130}}
    \put(-90,10){\vector(0,1){95}}
    \qbezier(-102,15)(-85,80)(-80,80)
    \put(-10,80){$t$}
    \put(-80,85){$\varepsilon$}
    \put(-89,50){$-C_0$}
    \put(-90,57){\circle*{2}}
    \put(-80,80){\circle*{2}}
    \put(-130,-10){Fig. 1. \ \ $\beta_\varepsilon (t)$ }

    \put(0,40){\vector(1,0){155}}
    \put(70,10){\vector(0,1){95}}
    \qbezier(40,40)(68,37)(87,58)
    \put(70,40){\line(1,1){50}}
    \put(50,40){\circle*{2}}
    \put(90,40){\circle*{2}}
    \put(35,30){$-\varepsilon$}
    \put(85,30){$\varepsilon$}
    \put(160,40){$t$}
    \put(40,-5){Fig. 2. \ \ $\pi_\varepsilon (t)$ }
  \end{picture}
\end{center}

\begin{lemma}
  \label{lem:uen}
  There exists a unique solution $u^N_\ep(x,\theta)\in
  W^{2,1}_p(\mathcal{Q}_N)$ to
  problem (\ref{3.2}) for any $p\ge
  1$. Moreover, the following estimates hold,
  \be
  &&\pi_\ep(e^x-K)\leq u^N_\ep(x,\theta)\leq k\theta+x^2+e^{x+(b-\rho
    c\la)^+\theta}+1,\label{3.3}\\
  &&0\leq \p_xu^N_\ep(x,\theta)\leq e^{x+(b-\rho
    c\la)^+\theta},\label{3.4}\\
  &&\p_\theta u^N_\ep(x,\theta)\geq0,\label{eq:ptunep}
  \ee
  where $k\geq \max\left\{c^2+\frac{(b-\rho c\la-\frac{1}{2}c^2)^2}{2\gamma(1-\rho^2)c^2},
    \;c^2+\frac{\left[\gamma(1-\rho^2)c^2e^{(b-\rho c\la)^+T}-(b-\rho
        c\la-\frac{1}{2}c^2)\right]^2}{2\gamma(1-\rho^2)c^2}\right\}$.
\end{lemma}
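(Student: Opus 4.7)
The plan is to establish existence by a Leray--Schauder fixed point argument and then derive the three pointwise estimates from the parabolic maximum principle via appropriately chosen sub- and super-solutions. To set up the fixed point, I would freeze the quadratic gradient term by defining a map $\mathcal{T}: C^{1+\alpha,(1+\alpha)/2}(\overline{\mathcal{Q}_N}) \to C^{1+\alpha,(1+\alpha)/2}(\overline{\mathcal{Q}_N})$ sending $v$ to the unique classical solution $u$ of the linear Neumann problem
\[
\partial_\theta u - \tfrac{1}{2}c^2 \partial_{xx} u - \bigl(b - \rho c\lambda - \tfrac{1}{2}c^2\bigr)\partial_x u + \tfrac{1}{2}\gamma(1-\rho^2)c^2 (\partial_x v)^2 + \beta_\ep(u - \pi_\ep(e^x-K)) = 0
\]
with the data of (\ref{3.2}). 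Standard H\"older linear parabolic theory makes $\mathcal{T}$ continuous and compact, and the Leray--Schauder alternative produces a fixed point once we have a uniform $C^{1+\alpha,(1+\alpha)/2}$ bound on the fixed points of $\sigma\mathcal{T}$ for $\sigma \in [0,1]$; this bound follows from the estimates proved below, since the same super/sub-solution structure survives the scaling. Uniqueness in $W^{2,1}_p(\mathcal{Q}_N)$ comes from a standard parabolic comparison argument for the difference of two strong solutions, using monotonicity of $\beta_\ep$ together with the gradient bound (\ref{3.4}) to handle the quadratic term; promotion from H\"older regularity to $W^{2,1}_p$ for arbitrary $p \geq 1$ is then routine linear $L^p$ parabolic theory, because all coefficients have been made bounded.

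For the lower bound in (\ref{3.3}) I use the candidate subsolution $\underline{u}(x,\theta) := \pi_\ep(e^x - K)$. Substituting into (\ref{3.2}) gives $\partial_\theta \underline{u} - \mathcal{L}\underline{u} + \beta_\ep(0) = -\mathcal{L}\underline{u} - C_0$, so choosing $C_0 \geq \sup_{x\in[-N,N]}|\mathcal{L}\pi_\ep(e^x-K)|$ makes $\underline{u}$ a subsolution; since the Neumann boundary and initial conditions match, comparison gives $u^N_\ep \geq \pi_\ep(e^x-K)$. The same $C_0$ also delivers (\ref{eq:ptunep}): differentiating (\ref{3.2}) in $\theta$ produces a linear parabolic equation in $w := \partial_\theta u^N_\ep$ with homogeneous Neumann data (because the data of (\ref{3.2}) are $\theta$-independent) and initial datum
\[
w(x,0) = \mathcal{L}\pi_\ep(e^x-K) - \beta_\ep(0) = \mathcal{L}\pi_\ep(e^x-K) + C_0 \geq 0,
\]
so the maximum principle gives $w \geq 0$.

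The upper bound in (\ref{3.3}) is established by verifying that $\Psi(x,\theta) = k\theta + x^2 + e^{x+(b-\rho c\lambda)^+\theta} + 1$ is a supersolution. Writing $E = e^{x+(b-\rho c\lambda)^+\theta}$, $A = b - \rho c\lambda - \tfrac{1}{2}c^2$ and $\gamma_0 = \gamma(1-\rho^2)c^2$, a direct computation yields
\[
\partial_\theta \Psi - \mathcal{L}\Psi = k - c^2 + \bigl[(b-\rho c\lambda)^+ - (b-\rho c\lambda)\bigr]E - 2xA + \tfrac{1}{2}\gamma_0(2x+E)^2,
\]
and completing the square in $x$ for the last two terms gives $-2xA + \tfrac{1}{2}\gamma_0(2x+E)^2 \geq AE - A^2/(2\gamma_0)$; when the sign of $A$ is such that the residual $AE$ fails to be nonnegative, one repeats the estimate after regrouping the cross term $2\gamma_0 xE$ with $\tfrac{1}{2}\gamma_0 E^2$ and invoking $E \leq e^{(b-\rho c\lambda)^+T}$ at the critical $x$, which produces the second expression inside the maximum defining $k$. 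Because $\Psi \geq 1 + e^x \geq \pi_\ep(e^x-K) + \ep$ for all small $\ep$, the penalty term $\beta_\ep(\Psi - \pi_\ep(e^x-K))$ vanishes identically on $\Psi$, and the Neumann/initial data are dominated; comparison then gives $u^N_\ep \leq \Psi$. For the gradient estimate (\ref{3.4}), I set $v = \partial_x u^N_\ep$ and differentiate (\ref{3.2}) in $x$ to obtain
\[
\partial_\theta v - \tfrac{1}{2}c^2 \partial_{xx} v - (A - \gamma_0 v)\partial_x v + \beta_\ep'(\cdot)(v - \pi_\ep'(e^x-K)e^x) = 0,
\]
with Dirichlet data $v(-N,\theta)=0$, $v(N,\theta)=e^N$ and initial datum $\pi_\ep'(e^x-K)e^x \in [0,e^x]$. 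The lower bound $v \geq 0$ is immediate from the minimum principle since $\beta_\ep' \geq 0$ and all boundary/initial values are nonnegative; the upper bound $v \leq e^{x+(b-\rho c\lambda)^+\theta}$ follows by checking that this function is a supersolution of the $v$-equation---the exponent $(b-\rho c\lambda)^+$ is chosen precisely so that the linear drift is dominated---and then invoking the parabolic comparison principle.

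The main technical obstacle lies in the construction of $\Psi$: the quadratic gradient term $-\tfrac{1}{2}\gamma_0(\partial_x u)^2$ enters with the wrong sign for propagating a growing upper bound, and balancing it against the linear coefficient $-2xA$ via completion of the square forces the two regimes described above, producing the two competing expressions in the definition of $k$. Once the pointwise bounds are in hand, the a priori bound required by Leray--Schauder is essentially automatic, and the remaining steps (uniqueness, time monotonicity, and $W^{2,1}_p$ promotion) are standard.
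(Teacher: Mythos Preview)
Your proposal follows essentially the same route as the paper: Leray--Schauder for existence, comparison principle for uniqueness, and the identical sub/super-solutions $\pi_\ep(e^x-K)$, $k\theta+x^2+e^{x+(b-\rho c\la)^+\theta}+1$, and $0$, $e^{x+(b-\rho c\la)^+\theta}$ for the bounds (\ref{3.3}) and (\ref{3.4}). The one genuine methodological difference is in (\ref{eq:ptunep}): you differentiate the equation in $\theta$ and apply the maximum principle to $w=\partial_\theta u^N_\ep$ with initial datum $w(x,0)=\mathcal{L}\pi_\ep(e^x-K)+C_0\ge 0$, whereas the paper instead uses a time-shift comparison, showing $u^\delta(x,\theta):=u^N_\ep(x,\theta+\delta)\ge u^N_\ep(x,\theta)$ directly from the comparison principle for the original nonlinear equation. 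The paper's argument is slightly more robust since it requires no regularity beyond $W^{2,1}_p$, while yours formally needs enough smoothness to differentiate the PDE in $\theta$; but since the paper itself differentiates in $x$ to obtain (\ref{3.4}), your approach is consistent with that level of rigor. Your explicit completion-of-the-square computation explaining the two terms in the definition of $k$ is a useful elaboration of what the paper leaves as ``one can directly check.''
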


\begin{proof} The
  Leray-Schauder fixed point theorem implies the existence of
  $W^{2,1}_p$ solution to
  problem (\ref{3.2}), and the comparison
  principle holds for the strong solution. The proof of the
  uniqueness is
  standard.

  First, we prove
  estimate (\ref{3.3}). Note that
  $u_1(x,\theta):=\pi_\ep(e^x-K)$
  satisfies $\partial_\theta u_1 - \mathcal{L} u_1+\beta_\ep(u_1-\pi_\ep(\cdot)) \le 0$ in
  $\mathcal{Q}_N$,  if we choose
  \begin{equation}
    \label{eq:C0}
    -\beta_\ep(0)=C_0=\rho c\la
    e^N+\frac{1}{2}\gamma(1-\rho^2)c^2e^{2N}.
  \end{equation}
  Moreover, when $N$ is large enough, we have
  $\p_xu_1(-N,\theta)=0,\;\p_xu_1(N,\theta)=e^N,\;u_1(x,0)=u^N_\ep(x,0).$
  Thus $u_1(x,\theta)=\pi_\ep(e^x-K)$ is a subsolution to
  problem (\ref{3.2}).

  On the other hand, one can check
  $u_2(x,\theta):=k\theta+x^2+e^{x+(b-\rho c\la)^+\theta}+1$ satisfies
  supersolution property
  $\partial_\theta u_2 - \mathcal{L} u_2+\beta_\ep(u_2-\pi_\ep(\cdot)) \ge 0$ in  $\mathcal{Q}_N$.
  Moreover, we have
  \bee
  &&\p_xu_2(-N,\theta)=-2N+e^{-N+(b-\rho c\la)^+\theta}\leq0,\\
  &&\p_xu_2(N,\theta)=2N+e^{N+(b-\rho c\la)^+\theta}\geq e^N,\\
  &&u_2(x,0)=x^2+e^x+1\geq \pi_\ep(e^x-K).
  \eee
  Applying the comparison principle,  we conclude \eqref{3.3}.

  Next, we
  verify inequality (\ref{3.4}). If we differentiate the
  equation in (\ref{3.2}) w.r.t. $x$, then
  $v(x,\theta)=\p_xu^N_\ep(x,\theta)$  satisfies
  \be\label{3.5}
  \left\{
    \begin{array}{ll}
      \p_\theta v-\frac{1}{2}c^2\p_{xx}v-(b-\rho c\la-\frac{1}{2}c^2)\p_xv\\
      +\gamma(1-\rho^2)c^2v\p_xv+\beta'_\ep(\cdot)(v-\pi'_\ep
      e^x)=0,\;&(x,\theta)\in \mathcal{Q}_N,
      \vspace{2mm}\\
      v(-N,\theta)=0,\;v(N,\theta)=e^N,&\theta\in(0,T],
      \vspace{2mm}\\
      v(x,0)=\pi'_\ep(\cdot)e^x,&x\in(-N,N).
    \end{array}
  \right.
  \ee
  Since $v_1 =  0$ and  $v_2(x,\theta)=e^{x+(b-\rho c\la)^+\theta}$ are
  subsolution and supersolution of \eqref{3.5}, the estimate
  \eqref{3.4} follows by comparison principle.

  Denote $u^\de(x,\theta):=u^N_\ep(x,\theta+\de)$ for $0<\de<T$,
  $u^\de(x,\theta)$ satisfies
  \bee
 \left\{
  \begin{array}{ll}
    \p_\theta u^\de- \mathcal{L} u^\de +
    \beta_\ep(u^\de-\pi_\ep(e^x-K))=0,\;
    &(x,\theta)\in (-N,N)\times(0,T-\de],
    \vspace{2mm}\\
    \p_xu^\de(-N,\theta)=0,\;\p_xu^\de(N,\theta)=e^N,&\theta\in(0,T-\de],
    \vspace{2mm}\\
    u^\de(x,0)=u^N_\ep(x,\de)\geq\pi_\ep(e^x-K)=u^N_\ep(x,0),&x\in(-N,N).
  \end{array}
 \right.
  \eee
  Combining with \eqref{3.2}, applying the comparison principle, we
  have
  \bee
  u^\de(x,\theta)\geq u^N_\ep(x,\theta),\;\;\;(x,\theta)\in (-N,N)\times(0,T-\de],
  \eee
  which implies \eqref{eq:ptunep}.
\end{proof}

\begin{lemma}
  \label{lem:un}
  Problem (\ref{3.1}) has a unique solution
  $u^N(x,\theta)\in W^{2,1}_{p,loc}(\mathcal{Q}_N)\cap
  C(\overline{\mathcal{Q}}_N)$ satisfying
  \be
  &&(e^x-K)^+\leq u^N(x,\theta)\leq k\theta+x^2+e^{x+(b-\rho
    c\la)^+\theta}+1,\label{3.6}\\
  &&0\leq \p_xu^N(x,\theta)\leq e^{x+(b-\rho
    c\la)^+\theta},\label{3.7}\\
  &&\p_\theta u^N(x,\theta)\geq0,\label{eq:ptun}
  \ee
  where $k$ is defined in Lemma~$\ref{lem:uen}$.
\end{lemma}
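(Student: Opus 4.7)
I would obtain $u^N$ as the vanishing-penalty limit $\ep\to 0^+$ of the solutions $u^N_\ep$ supplied by Lemma~\ref{lem:uen}, and then verify that the obstacle problem \eqref{3.1} is inherited. The first step is to collect $\ep$-uniform a priori bounds. Since $u^N_\ep-\pi_\ep(e^x-K)\ge 0$ by the lower half of \eqref{3.3} and $\beta_\ep$ is non-decreasing with $\beta_\ep(0)=-C_0$, one has
\[
-C_0 \;\le\; \beta_\ep\bigl(u^N_\ep-\pi_\ep(e^x-K)\bigr) \;\le\; 0,
\]
so the penalty term is bounded in $L^\infty(\mathcal{Q}_N)$ independently of $\ep$. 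Combined with \eqref{3.4}, the quadratic term $\gamma(1-\rho^2)c^2(\p_x u^N_\ep)^2$ is also uniformly bounded. Rewriting \eqref{3.2} by moving the quadratic gradient and the penalty to the right-hand side casts it as a linear parabolic equation with smooth bounded coefficients and uniformly bounded source; interior parabolic $L^p$ theory then gives $\|u^N_\ep\|_{W^{2,1}_p(\mathcal{K})}\le C(p,\mathcal{K})$ for every compact $\mathcal{K}\subset\mathcal{Q}_N$ and any $p\ge 1$, uniformly in $\ep$.

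Fixing $p>3$, Sobolev embedding promotes these bounds to uniform $C^{1+\al,(1+\al)/2}_{\rm loc}$ estimates. A diagonal extraction produces a subsequence $u^N_{\ep_k}\to u^N$ weakly in $W^{2,1}_{p,\rm loc}(\mathcal{Q}_N)$ and strongly in $C^{1,\al/2}_{\rm loc}(\mathcal{Q}_N)$, so $\p_x u^N_{\ep_k}\to\p_x u^N$ locally uniformly. The pointwise bounds \eqref{3.3}-\eqref{eq:ptunep} pass to the limit to yield \eqref{3.6}-\eqref{eq:ptun}; in particular $u^N\ge(e^x-K)^+$. To recover the obstacle equation, on any compact subset of the open set $\{u^N>(e^x-K)^+\}$ one has $u^N_{\ep_k}-\pi_{\ep_k}(e^x-K)\ge\eta>0$ for large $k$, forcing $\beta_{\ep_k}(\cdot)\to 0$ uniformly there and giving $\p_\theta u^N-\mathcal{L}u^N=0$ a.e.; on its complement, $\beta_{\ep_k}\le 0$ yields $\p_\theta u^N-\mathcal{L}u^N\ge 0$ a.e. in the weak limit. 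Combining with $u^N\ge(e^x-K)^+$ recovers the $\min\{\cdot,\cdot\}=0$ form. The Neumann conditions and the initial datum are inherited from the $C^{1,\al/2}_{\rm loc}$ convergence up to the parabolic boundary, after supplementing the interior estimates above with the analogous boundary $W^{2,1}_p$ estimate standard for Neumann problems.

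For uniqueness in $W^{2,1}_{p,\rm loc}(\mathcal{Q}_N)\cap C(\overline{\mathcal{Q}}_N)$, given two solutions $u,\tilde u$, on $\{u>\tilde u\}$ one has $\p_\theta u-\mathcal{L}u=0$ and $\p_\theta\tilde u-\mathcal{L}\tilde u\ge 0$. Using the $L^\infty$-bound on $\p_x u^N$ from \eqref{3.7} together with convexity of $\xi\mapsto\xi^2$, the difference of quadratic gradient terms linearizes into a first-order drift acting on $\p_x(u-\tilde u)$, reducing the argument to a maximum principle for a linear parabolic inequality on $(u-\tilde u)^+$ with homogeneous initial and Neumann data; this forces $(u-\tilde u)^+\equiv 0$, and by symmetry $u=\tilde u$.

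The main technical obstacle I anticipate lies in the limit passage of the second paragraph, where the nonlinear gradient term $(\p_x u^N_\ep)^2$ and the discontinuous-limit penalty $\beta_\ep$ must be handled together: the former goes through because strong $C^{1,\al/2}_{\rm loc}$-convergence upgrades the weak gradient convergence to pointwise, and the latter requires the partition into the contact and non-contact regions above. A secondary subtlety is that the Neumann condition at $x=\pm N$ is not the conormal condition for the linearized operator in the uniqueness argument, so care is needed to keep the boundary terms nonnegative when applying the maximum principle.
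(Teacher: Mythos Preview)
Your overall strategy matches the paper's: pass to the limit $\ep\to 0$ in the penalized problem using the uniform bounds from Lemma~\ref{lem:uen}, then recover the obstacle formulation, and prove uniqueness by linearizing the quadratic gradient term on $\{u>\tilde u\}$ and applying a maximum principle. The uniqueness argument is essentially identical to the paper's.

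There is, however, a genuine gap in your limit passage near the corner $(\ln K,0)$. You claim $C^{1,\al/2}_{\rm loc}$ convergence ``up to the parabolic boundary,'' but this cannot hold uniformly in $\ep$ in any neighborhood of $(\ln K,0)$: the limiting initial datum $(e^x-K)^+$ is not $C^1$ there, so the $W^{2,1}_p$ estimate up to $\theta=0$ (and a fortiori the $C^{1+\al}$ embedding) must blow up as $\ep\to 0$ near that point. Your interior $W^{2,1}_p$ bounds on compacta $\mathcal{K}\subset\mathcal{Q}_N$ are fine, and so is the extension to the lateral Neumann boundary, but they give no control as $\theta\downarrow 0$ near $x=\ln K$; consequently you have not established $u^N\in C(\overline{\mathcal{Q}}_N)$ nor verified the initial condition at that point.

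The paper handles this differently and in two separate steps. First, it treats the quadratic term as a first-order drift with bounded coefficient $\frac12\gamma(1-\rho^2)c^2\,\p_x u^N_\ep$ (bounded by \eqref{3.4}) and applies the De~Giorgi--Nash--Moser estimate to obtain a \emph{global} $C^{1/2,1/4}(\mathcal{Q}_N)$ bound, which requires only $L^\infty$ control on the data and therefore survives the kink; this yields $u^N_\ep\to u^N$ in $C(\overline{\mathcal{Q}}_N)$. Second, it obtains the $W^{2,1}_p$ estimate only on $\mathcal{Q}_N\setminus B_\de$, where $B_\de$ is a disk about $(\ln K,0)$, explicitly excising the singular corner. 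Your proof can be repaired by inserting exactly this two-step decoupling: use a low-regularity H\"older estimate for global continuity and the initial condition, and keep the $W^{2,1}_p$ bounds local, away from $(\ln K,0)$.
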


\begin{proof}
  Let $C_N$ be a generic constant independent of $\varepsilon$.

  Since
  $u^N_\ep(x,\theta)\geq\pi_\ep(e^x-K)$, then
  $|\beta_\ep(u^N_\ep-\pi_\ep(e^x-K ))|_{L^p(\mathcal{Q}_N)}
  \le C_N$.  One can treat the
  nonlinear term in the equation in (\ref{3.2}) as a linear term with
  the coefficient $\frac{1}{2}\gamma(1-\rho^2)c^2\p_xu^N_\ep$.
  Thanks to (\ref{3.4}),
  applying $C^{\al,\al/2}$ estimate (De Giorgi-Nash-Moser estimate,
  \cite{Lie96})   with $\alpha = 1/2$ , we have
  \bee
  |u^N_\ep|_{C^{1/2, 1/4}(\mathcal{Q}_N)}\leq C_N.
  \eee
  Letting
  $\ep\rightarrow0$, we have a continuous limit (of a subsequence if
  necessary) $u^N(x,\theta)$  up to the boundary, that is,
  \bee
  u^N_\ep(x,\theta)\rightarrow u^N(x,\theta) \;\;\;\;{\rm
    in}\;C(\overline{\mathcal{Q}}_N).
  \eee
  In view of (\ref{3.4}), applying $W^{2,1}_p$ estimate \cite{LSU67},
  we have
  \bee
  |u^N_\ep|_{W^{2,1}_p(\mathcal{Q}_N\setminus B_\de)}\leq C_N,
  \eee
  where $B_\de$ is a disk with center $(\ln K,0)$ and radius
  $\de>0$. Hence $u^N(x,\theta)\in W^{2,1}_{p,loc}(\mathcal{Q}_N)$  and
  \bee
  u^N_\ep(x,\theta) \rightharpoonup u^N(x,\theta) \;\;\;\; {\rm weakly \;
    in}\;W^{2,1}_{p,loc}(\mathcal{Q}_N),
  \eee
  which also implies $u^N(x,\theta)$ is
  a $W^{2,1}_{p,loc}$ solution
  of  (\ref{3.1}). Furthermore,
  (\ref{3.6})-(\ref{eq:ptun}) are consequences of (\ref{3.3})-
  (\ref{eq:ptunep}).

  In this below, we show the uniqueness.
  If not, we can find  two $W^{2,1}_{p}$ solutions  $u_1$ and $u_2$
  satisfying (\ref{3.6}) and (\ref{3.7}),
  and $\mathcal {N}=\{(x,\theta)\in
  \mathcal{Q}_N: u_1>u_2\}\neq\emptyset$. Observe that, $ \p_\theta u_1
  = \mathcal{L} u_1 $ and $ \p_\theta u_2 \ge  \mathcal{L} u_2$ in $\mathcal {N}$. Thus,
  $u_1-u_2$ satisfies
  \bee
  \left\{
    \begin{array}{ll}
      \p_\theta (u_1-u_2)-\frac{1}{2}c^2\p_{xx}(u_1-u_2)-(b-\rho
      c\la-\frac{1}{2}c^2)\p_x(u_1-u_2)\\
      +\frac{1}{2}\gamma(1-\rho^2)c^2(\p_xu_1+\p_xu_2)
      \p_x(u_1-u_2)\leq0,\;\;\;\;(x,\theta)\in\mathcal {N},
      \vspace{2mm}\\
      (u_1-u_2)(x,\theta)=0,\hspace{3.5cm}(x,\theta)\in\p_p\mathcal
      {N}\setminus\{x=\pm N\},
      \vspace{2mm}\\
      \p_x(u_1-u_2)(x,\theta)=0,\hspace{3.2cm}(x,\theta)\in\p_p\mathcal
      {N}\cap\{x=\pm N\}.
    \end{array}
  \right.
  \eee
  Owing to (\ref{3.7}), applying maximum principle (see \cite{Tso85}), we have
  $u_1-u_2\leq0$ on $\mathcal {N},$ which is a contradiction to  the
  definition of $\mathcal {N}$.
\end{proof}

\begin{lemma}
  \label{lem:u}
  There exists a unique solution
  $u(x,\theta)\in W^{2,1}_{p,loc}(\mathcal{Q}) \cap
  C(\overline{\mathcal{Q}})$ to
  problem \eqref{eq:pdeu} satisfying
  \be
  &&(e^x-K)^+\leq u(x,\theta)\leq k\theta+x^2+e^{x+(b-\rho
    c\la)^+\theta}+1,\label{3.11}\\
  &&0\leq \p_xu(x,\theta)\leq e^{x+(b-\rho
    c\la)^+\theta},\label{3.12}\\
  &&\p_\theta u(x,\theta)\geq0,\label{eq:ptu}
  \ee
  where $k$ is defined in Lemma~$\ref{lem:uen}$.
\end{lemma}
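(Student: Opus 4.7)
The plan is to take the limit $N\to\infty$ in the family $\{u^N\}$ produced by Lemma~\ref{lem:un}, in direct parallel to the way that lemma itself passed to the limit $\varepsilon\to 0$. The crucial feature is that the bounds (\ref{3.6})--(\ref{eq:ptun}) on $u^N$ are \emph{uniform in $N$} once $N$ is taken larger than any fixed compact set in $\mathcal{Q}$. In particular, a monotonicity argument using the comparison principle of Lemma~\ref{lem:un} (comparing $u^N$ restricted to $\mathcal{Q}_M$ with $u^{N+1}$, which is a supersolution thereon satisfying larger boundary data) shows that $u^N$ is in fact monotone in $N$, which will help both with convergence and with later uniqueness.

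First I would fix a bounded subcylinder $\mathcal{Q}_M$ with $M\ll N$. On $\mathcal{Q}_M$ the gradient bound $0\le \partial_x u^N\le e^{x+(b-\rho c\lambda)^+\theta}$ is uniform in $N$, so the quadratic term $\tfrac12\gamma(1-\rho^2)c^2(\partial_x u^N)^2$ can be reinterpreted as a linear first-order term whose coefficient is bounded uniformly in $N$. The De Giorgi--Nash--Moser estimate then yields $|u^N|_{C^{1/2,1/4}(\mathcal{Q}_{M-1})}\le C_M$ uniformly, and Arzel\`a--Ascoli together with a diagonal argument produces a subsequence converging in $C_{\rm loc}(\overline{\mathcal{Q}})$ to some continuous limit $u$. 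Away from the corner $(\ln K,0)$ the initial datum is smooth, so the interior $W^{2,1}_p$ estimate from \cite{LSU67} gives $|u^N|_{W^{2,1}_p(\mathcal{Q}_{M-1}\setminus B_\delta)}\le C_{M,\delta}$; weak compactness then yields $u^N\rightharpoonup u$ in $W^{2,1}_{p,\rm loc}(\mathcal{Q})$. Passing to the limit in the variational inequality \eqref{3.1} gives that $u$ satisfies \eqref{eq:pdeu} a.e., with the initial condition retained by uniform convergence up to $\{\theta=0\}$. The estimates (\ref{3.11})--(\ref{eq:ptu}) are inherited termwise from (\ref{3.6})--(\ref{eq:ptun}).

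For uniqueness, I would repeat the contradiction argument at the end of Lemma~\ref{lem:un}: given two $W^{2,1}_{p,\rm loc}$ solutions $u_1,u_2$ of \eqref{eq:pdeu} satisfying (\ref{3.11})--(\ref{3.12}), set $\mathcal{N}=\{u_1>u_2\}$ and observe that on $\mathcal{N}$ we have $\partial_\theta u_1=\mathcal{L}u_1$ and $\partial_\theta u_2\ge \mathcal{L}u_2$, so $w:=u_1-u_2$ satisfies a linear parabolic inequality
$$\partial_\theta w-\tfrac12 c^2\partial_{xx}w-\bigl(b-\rho c\lambda-\tfrac12 c^2\bigr)\partial_x w+\tfrac12\gamma(1-\rho^2)c^2(\partial_x u_1+\partial_x u_2)\,\partial_x w\le 0.$$
The main obstacle is that $\mathcal{N}$ may be unbounded, while the first-order coefficient has exponential growth at $x\to+\infty$; so an ordinary maximum principle does not apply, and standard Tychonoff-type uniqueness classes do not match the growth (\ref{3.11}). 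My resolution would be to truncate to $\mathcal{N}\cap\mathcal{Q}_M$, bound $w$ on the lateral boundary $\{x=\pm M\}$ by a barrier of the form $\phi(x,\theta)=\eta\,e^{k\theta+x^2/M}$ that dominates both $u_1$ and $u_2$ for $M$ large, apply a weighted maximum principle \`a la Tso \cite{Tso85}, and let $M\to\infty$; alternatively, one can bypass the issue altogether by invoking the monotonicity of $u^N$ in $N$ noted above, which already pins down the limit uniquely as the smallest (or largest) solution in the growth class. Sorting out this growth-vs-coefficient balance is the delicate technical step.
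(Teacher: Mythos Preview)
Your existence argument is essentially the paper's: uniform gradient bounds from Lemma~\ref{lem:un} allow one to treat the quadratic nonlinearity as a bounded first-order coefficient, then interior H\"older and $W^{2,1}_p$ estimates on $\mathcal{Q}_M\setminus B_\delta$ (uniform in $N$) plus a diagonal extraction yield the limit $u\in W^{2,1}_{p,loc}(\mathcal{Q})\cap C(\overline{\mathcal{Q}})$. The paper streamlines this slightly by first rewriting the truncated variational inequality as a linear equation with bounded right-hand side (the obstacle-touching set contributes a characteristic-function source), but this is cosmetic.

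The real divergence is in the uniqueness. You try to push the comparison/maximum-principle argument of Lemma~\ref{lem:un} to the unbounded domain $\mathcal{Q}$, and you correctly identify the obstruction: the effective first-order coefficient $\tfrac12\gamma(1-\rho^2)c^2(\partial_x u_1+\partial_x u_2)$ grows like $e^x$, which is too fast for the usual Tychonoff-type growth classes, and your proposed barrier $\eta\,e^{k\theta+x^2/M}$ does not obviously beat the drift in the differential inequality. Your monotonicity-in-$N$ observation at best singles out a canonical limit, not uniqueness among all solutions in the stated class. The paper sidesteps this entirely: it invokes the verification theorems (Lemma~\ref{lem:veri} and Lemma~\ref{lem:vP}), which show that \emph{any} $W^{2,1}_{p,loc}$ solution of \eqref{eq:pdeP}--\eqref{eq:tcP} with bounded $\partial_y f$ (equivalently, any solution of \eqref{eq:pdeu} satisfying \eqref{3.12}) coincides with the indifference price $P$, hence with the value function. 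So uniqueness is obtained probabilistically, not by a PDE maximum principle on the unbounded domain. This is the missing idea in your proposal.
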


\begin{proof}
 First, we claim
 problem (\ref{3.1}) is
 equivalent to the following problem
 \be\label{3.9}
 \left\{
   \begin{array}{ll}
     \min\{\p_\theta u^N - \mathcal{L}u^N, \;
     u^N-(e^x-K)\}=0,
     \;&(x,\theta)\in \mathcal{Q}_N,
     \vspace{2mm}\\
     \p_xu^N(-N,\theta)=0,\;\p_xu^N(N,\theta)=e^N,&\theta\in(0,T],
     \vspace{2mm}\\
     u^N(x,0)=(e^x-K)^+,&x\in(-N,N).
   \end{array}
 \right.
 \ee
 In fact,  suppose $w(x,\theta)$ is a $W^{2,1}_{p,loc}$ solution to
 problem (\ref{3.9}),
 the maximum principle \cite{Tso85} implies  $w\geq 0$, combine with $w\ge
 e^x-K$ to get $w(x,\theta) \geq
 (e^x-K)^+$. Hence, $w$ is also a solution of
 \eqref{3.1}. Together with the uniqueness of \eqref{3.1}, the
 equivalence follows.

 Now we can rewrite
 problem (\ref{3.1}) as
 \be\label{3.10}
 \left\{
   \begin{array}{ll}
     \p_\theta u^N- \mathcal{L} u^N = f(x,\theta),\;&(x,\theta)\in \mathcal{Q}_N,
     \vspace{2mm}\\
     \p_xu^N(-N,\theta)=0,\;\p_xu^N(N,\theta)=e^N,&\theta\in(0,T],
     \vspace{2mm}\\
     u^N(x,0)=(e^x-K)^+,&x\in(-N,N),
   \end{array}
 \right.
 \ee
 where $f(x,\theta)=\chi_{\{u^N(x,\theta)=(e^x-K)^+\}}\big(-(b-\rho
 c\la)e^x+\frac{1}{2}\gamma(1-\rho^2)c^2e^{2x}\big)$.
 Thanks to (\ref{3.7}), if we apply $W^{2,1}_p$ interior estimate
 \cite{LSU67} to (\ref{3.10}) for arbitrary $M<N$,
 \bee
 |u^N|_{W^{2,1}_p(\mathcal{Q}_M\setminus B_\delta)}\leq C_M,
 \eee
 where $B_\delta$ is a disk with center $(\ln K,0)$ and
 radius $\delta$. We emphasize  $C_M$ only depends on $M$, but not on
 $N$.
 Fix $M>0$, and let $N\rightarrow+\infty$, $u^N$ leads to a limit
 $u^{(M)}$ (possibly a subsequence) in the fixed domain $\mathcal{Q}_M$ in
 the sense,
 \bee
 u^N\rightharpoonup u^{(M)}\;\;\;\;{\rm
 weakly\;in}\;W^{2,1}_{p,loc}(\mathcal{Q}_M)\;{\rm as}\;N\rightarrow\infty.
 \eee
 Moreover the sobolev embedding theorem implies
 \bee
 u^N\rightarrow u^{(M)}\;\;\;\;{\rm
 in}\;C(\mathcal{Q}_M),\;\p_xu^N\rightarrow \p_xu^{(M)}\;\;\;\;{\rm
 in}\;C(\mathcal{Q}_M).
 \eee

 It is clear that $u(x,\theta): = u^{(M)}(x,\theta),\;(x,\theta)\in\mathcal {Q}_M$ is well defined and $u$ is
 the solution to
 problem
 (\ref{eq:pdeu}). Moreover, $\p_xu\in C(\mathcal{Q})$ and we can
 deduce $u\in C(\overline{\mathcal{Q}})$ from the $C^\al$ estimate.
  (\ref{3.11})-(\ref{eq:ptu}) are consequences of
 (\ref{3.6})-
 (\ref{eq:ptun}). Lemma~\ref{lem:veri} and ~\ref{lem:vP} implies the
 uniqueness of solution to
 problem
 (\ref{eq:pdeu}).
\end{proof}

\subsection{The Obstacle of \eqref{eq:pdeu}}

Now we will characterize the optimal exercising time of $V$. Problem
\eqref{eq:pdeu} is an optimal stopping problem, which gives rise
 to a free boundary
 that can be expressed as a single-valued
 function of $\theta$. For later use, we define
 \bee
 &&\mathcal {S}:=
 \{(x,\theta)\in\mathcal {Q}: u(x,\theta)=(e^x-K)^+\} \;{\rm(Stopping\;
 region)},\\
 &&\mathcal {C}:=
 \{(x,\theta)\in\mathcal {Q}: u(x,\theta)>(e^x-K)^+\}\;{\rm(Continuation\;
 region)},
 \eee
where $u$ is the solution of \eqref{eq:pdeu}. Thanks to  the
continuity of $u$ from Lemma~\ref{lem:u}, $\mathcal{S}$ is closed
and $\mathcal{C}$ is open, respectively.
  \begin{lemma}
  \label{lem:sx}
  There exists $S(x): (-\infty,+\infty)\rightarrow
 [0,T]$ such that
 \be\label{4.1}
 \mathcal {S}=\{(x,\theta)\in\mathcal {Q}: 0<\theta\leq S(x)\}.
 \ee
 \end{lemma}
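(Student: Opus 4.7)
The plan is to exploit the time monotonicity $\partial_\theta u \ge 0$ established in Lemma~\ref{lem:u}, which says that $u(x,\cdot)$ is non-decreasing in $\theta$ for each fixed $x$. Combined with the obstacle constraint $u \ge (e^x-K)^+$ from \eqref{eq:pdeu}, this immediately makes each $x$-slice of the stopping region a downward-closed subinterval of $(0,T]$, and $S(x)$ is then defined as its supremum.

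First, I would fix $x\in\mathbb{R}$ and consider the section
\[
\mathcal{S}_x := \{\theta\in(0,T] : u(x,\theta)=(e^x-K)^+\}.
\]
Suppose $\theta_1\in\mathcal{S}_x$ and $0<\theta_2<\theta_1$. By \eqref{eq:ptu} we have $u(x,\theta_2)\le u(x,\theta_1)=(e^x-K)^+$, while the obstacle side of \eqref{eq:pdeu} together with \eqref{3.11} gives $u(x,\theta_2)\ge(e^x-K)^+$. Hence $u(x,\theta_2)=(e^x-K)^+$, so $\theta_2\in\mathcal{S}_x$. Therefore $\mathcal{S}_x$ is an initial segment of $(0,T]$.

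Second, I define
\[
S(x) := \sup\mathcal{S}_x\in[0,T],
\]
with the convention $\sup\emptyset=0$. Since $u\in C(\overline{\mathcal{Q}})$ (Lemma~\ref{lem:u}) and the obstacle $(e^x-K)^+$ is continuous, $\mathcal{S}$ is closed in $\mathcal{Q}$, so each $\mathcal{S}_x$ is closed in $(0,T]$. Consequently, whenever $S(x)>0$ the supremum is attained, and combined with the initial-segment property this yields $\mathcal{S}_x=(0,S(x)]$; when $S(x)=0$ the set $\mathcal{S}_x$ is empty, which also matches $(0,S(x)]=\emptyset$. Taking the union over $x$ produces the desired representation \eqref{4.1}.

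There is no serious obstacle here: the argument is essentially a one-line consequence of $\partial_\theta u\ge 0$ and the obstacle inequality. The only point requiring minor care is the bookkeeping at $S(x)=0$ (where $\mathcal{S}_x$ is empty) and using the closedness of $\mathcal{S}$ to guarantee that when $S(x)>0$ the supremum is actually achieved, so that the stopping slice is the closed interval $(0,S(x)]$ rather than $(0,S(x))$.
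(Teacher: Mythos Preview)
Your proposal is correct and follows essentially the same argument as the paper's own proof: both use the time monotonicity \eqref{eq:ptu} together with the obstacle lower bound \eqref{3.11} to conclude that each $x$-section of $\mathcal{S}$ is a downward-closed interval, and then define $S(x)$ as its supremum. Your version is in fact slightly more careful, since you explicitly invoke the closedness of $\mathcal{S}$ (via $u\in C(\overline{\mathcal{Q}})$) to ensure the supremum is attained, a point the paper leaves implicit.
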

 \begin {proof}
  If $(x_0,\theta_0)\in\mathcal {S}$, i.e.,
 \bee
 u(x_0,\theta_0)=(e^{x_0}-K)^+,
 \eee
 according to \eqref{3.11} and \eqref{eq:ptu}, we have
 \bee
 u(x_0,\theta)=(e^{x_0}-K)^+, \; \theta\in(0,\theta_0].
 \eee
 Hence $\{x_0\}\times(0,\theta_0]\subseteq\mathcal {S}$, so we can
 define $S(x):  (-\infty,+\infty)\rightarrow
 [0,T]$ as
 \bee
 S(x)=
 \left\{
 \begin{array}{ll}
 0,\;\;\;{\rm if}\;u(x,\theta)>(e^x-K)^+\ {\rm
 for\;any}\;\theta\in(0,T],
 \vspace{2mm}\\
 \sup\{\theta\in(0,T]: u(x,\theta)=(e^x-K)^+\}.
 \end{array}
 \right.
 \eee
 By the definition of $S(x)$,
 (\ref{4.1}) holds.
 \end{proof}

 \begin{lemma}
 \label{lem:sxin}
 The free boundary $S(x)$ is
 strictly increasing w.r.t. $x$ on $\{x: 0<S(x)<T\}$.
 \end{lemma}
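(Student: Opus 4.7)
My plan is to argue by contradiction. Suppose there exist $x_1<x_2$ in $\{x:0<S(x)<T\}$ with $S(x_1)\ge S(x_2)=:\theta_0$. By Lemma~\ref{lem:sx}, the vertical segments $\{x_1\}\times(0,S(x_1)]$ and $\{x_2\}\times(0,\theta_0]$ lie in the stopping region $\mathcal{S}$, while $(x_2,\theta)\in\mathcal{C}$ for $\theta\in(\theta_0,T)$. Pick $\theta_*\in(\theta_0, S(x_1)\wedge T)$; then $(x_1,\theta_*)\in\mathcal{S}$ whereas $(x_2,\theta_*)\in\mathcal{C}$, and by continuity of $u$ the rightmost point $x^*:=\sup\{x\le x_2:(x,\theta_*)\in\mathcal{S}\}$ lies in $[x_1,x_2)$ and sits on the free boundary, with a neighborhood in $\mathcal{C}$ immediately to its right.

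First, I would extract from the variational inequality the pointwise constraint in the stopping region: wherever $e^x>K$, $u=e^x-K$, and substitution into $\p_\theta u-\mathcal{L}u\ge0$ yields $F(x):=(b-\rho c\la)e^x-\tfrac12\gamma(1-\rho^2)c^2 e^{2x}\le 0$. Using the lower bound $u\ge(e^x-K)^+$ of \eqref{3.11} together with $S(x_i)>0$, one checks $x_1,x_2,x^*>\ln K$, so this inequality is in force throughout a neighborhood of $(x^*,\theta_*)$ on the $\mathcal{S}$ side.

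Second, consider $w:=\p_\theta u$. By Lemma~\ref{lem:u}, $w\ge 0$, and on $\mathcal{S}$, $w\equiv 0$ since $u=(e^x-K)^+$ is $\theta$-independent there. In the continuation region $\mathcal{C}$, $u$ satisfies the uniformly parabolic quasi-linear equation $\p_\theta u=\mathcal{L}u$ and is therefore smooth by interior parabolic regularity; differentiating in $\theta$ produces the linear parabolic equation
\[
\p_\theta w-\tfrac12 c^2\p_{xx}w-\bigl[b-\rho c\la-\tfrac12c^2-\gamma(1-\rho^2)c^2\p_xu\bigr]\p_xw=0.
\]
By the strong maximum principle, $w>0$ strictly in $\mathcal{C}$: otherwise $u$ would be $\theta$-constant on a connected component of $\mathcal{C}$ and would coincide there with the obstacle, contradicting the definition of $\mathcal{C}$.

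Finally, the parabolic Hopf lemma applied at $(x^*,\theta_*)$ (with $\mathcal{C}$ just to its right) forces the inward $x$-normal derivative of $w$ to be strictly positive, $\p_xw(x^{*+},\theta_*)>0$. On the $\mathcal{S}$ side of the segment $\{x^*\}$ at $\theta$ near $\theta_*$, however, $u=e^x-K$ gives $\p_xu=e^x$ constant in $\theta$, so $\p_xw=\p_\theta\p_xu=0$ there; the $C^1$ smooth-fit regularity of $u$ across the free boundary (a consequence of the $W^{2,1}_{p,loc}$ regularity with $p\ge 3$ and the interior estimates of Lemma~\ref{lem:u}) then matches $\p_xu$ across the segment, forcing $\p_xw(x^{*+},\theta_*)=0$ and contradicting Hopf. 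The main obstacle is the rigorous deployment of the strong maximum principle and parabolic Hopf lemma for this linearized equation under merely $W^{2,1}_{p,loc}$ regularity, together with the careful smooth-fit analysis across the free-boundary segment separating $\mathcal{S}$ and $\mathcal{C}$.
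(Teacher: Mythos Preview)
Your argument has a genuine gap: the contradiction hypothesis $S(x_1)\ge S(x_2)=\theta_0$ splits into two cases, and your setup only treats the strict case $S(x_1)>S(x_2)$. Indeed, you pick $\theta_*\in(\theta_0,S(x_1)\wedge T)$, but when $S(x_1)=S(x_2)=\theta_0<T$ this interval is empty and the argument never begins. Thus your proof, as written, establishes only that $S$ is non-decreasing, not that it is strictly increasing. The flat case---$S$ constant on an interval $(x_1,x_2)$---is precisely what remains, and it is the heart of the lemma. The paper disposes of it by a direct computation: on the flat portion one has $u(x,\theta_0)=e^x-K$ and $\partial_\theta u-\mathcal{L}u=0$ just above $\theta_0$, so $\partial_\theta u|_{\theta=\theta_0^+}=(b-\rho c\lambda)e^x-\tfrac12\gamma(1-\rho^2)c^2e^{2x}$. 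The stopping-region constraint forces $e^{x_1}\ge\max\{K,\,2(b-\rho c\lambda)/(\gamma(1-\rho^2)c^2)\}$, and for $x>x_1$ this expression is \emph{strictly} negative, contradicting \eqref{eq:ptu}. Your Hopf strategy does not access this strict sign.

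There is also a difficulty in your final step even for the case $S(x_1)>S(x_2)$. Smooth fit gives continuity of $\partial_x u$ across the free boundary, but your contradiction needs continuity of $\partial_x w=\partial_{x\theta}u$, a second-order quantity. Saying ``on the $\mathcal{S}$ side $\partial_x u=e^x$ is constant in $\theta$'' presupposes an $\mathcal{S}$-neighborhood in the $x$-direction near $(x^*,\theta_*)$, which you have not established (you only know $\{x^*\}\times(0,\theta_*]\subset\mathcal{S}$). In the paper's Lemma~\ref{lem:sxco}, the analogous Hopf argument works because monotonicity of $S$ has \emph{already} been proved, guaranteeing that the half-space $\{x\ge x_0\}\times(0,\theta_1]$ lies in $\mathcal{S}$; you are trying to use a Hopf argument before that geometric information is available. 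Incidentally, the paper proves weak monotonicity by an entirely different route: it builds an explicit comparison function $\bar u$ equal to $u$ for $x\le x_0$ and to $e^x-K$ for $x\ge x_0$, and invokes uniqueness on the strip $\mathcal{Q}_0$.
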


 \begin{proof}
 As in the proof of the equivalence between (\ref{3.1}) and
 (\ref{3.9}) in Lemma~\ref{lem:u}, we can show that problem \eqref{eq:pdeu} is
 equivalent to the following problem
 \be\label{3.14}
 \left\{
 \begin{array}{ll}
 \min\big\{\p_\theta u-\mathcal {L}u,\;u-(e^x-K)\big\}=0,\;&(x,\theta)\in\mathcal {Q},
 \vspace{2mm}\\
 u(x,0)=(e^x-K)^+,&x\in\mathbb{R}.
 \end{array}
 \right.
 \ee
  When
 $u(x,\theta)=(e^x-K)^+=(e^x-K)$, we have
 \bee
 \left\{
 \begin{array}{ll}
 e^x-K\geq 0,\label{4.2}\\
 \p_\theta u(x,\theta)-\mathcal {L}u(x,\theta)\geq0,\label{4.3}
  \end{array}
 \right.
 \eee
 which implies
 \be\label{4.4}
 e^x\geq \max\left\{K,\;
 \frac{2(b-\rho c\la)}{\gamma(1-\rho^2)c^2}\right\}.
 \ee

 For any $x_0\in\{x: 0<S(x)<T\}$, denote $S(x_0)=\theta_0\in(0,T)$,
 in view of (\ref{4.1}), we know
 \bee
 u(x_0,\theta)=e^{x_0}-K\geq0, \; \theta\in(0,\theta_0].
 \eee
 Denote $\mathcal {Q}_0=(-\infty,+\infty)\times(0,\theta_0]$ and define a new
 function $\overline{u}(x,\theta)$ on $\mathcal {Q}_0$ by
 \bee
 \overline{u}(x,\theta)=
 \left\{
 \begin{array}{ll}
 u(x,\theta),&(x,\theta)\in(-\infty,x_0]\times[0,\theta_0],
 \vspace{2mm}\\
 e^x-K,&(x,\theta)\in[x_0,+\infty)\times[0,\theta_0].
 \end{array}
 \right.
 \eee
 Since $\{x_0\}\times(0,\theta_0]\subseteq\mathcal {S}$, then
 $\overline{u}(x,\theta),\;\p_x\overline{u}(x,\theta)$ are continuous in
 $\mathcal {Q}_0$. Now we want to prove $\overline{u}(x,\theta)$ is the
 solution to
 problem \eqref{eq:pdeu} in the domain $\mathcal {Q}_0$.

 By the definition of $\overline{u}(x,\theta)$ we know
 \bee
 \overline{u}(x,0)=(e^{x}-K)^+, \; x\in\mathbb{R}.
 \eee
 According to \eqref{4.4}, we
 can check $\overline{u}(x,\theta)$
 satisfies
 \bee
 \min\big\{\p_\theta \overline{u}-\mathcal
 {L}\overline{u},\;\overline{u}-(e^x-K)^+\big\} =0, \; (x,t)\in\mathcal {Q}_0.
 \eee
 Hence, we know $\overline{u}(x,\theta)$
 is the solution  of \eqref{eq:pdeu} in
 domain $\mathcal {Q}_0$. By the
 uniqueness of $W^{2,1}_{p,loc}(\mathcal {Q})\cap C(\overline{\mathcal {Q}})$
 solution which satisfies (\ref{3.11})-(\ref{3.12}) to
 problem \eqref{eq:pdeu} we
 know
 \bee
 u(x,\theta)= \overline{u}(x,\theta),\;\;\;\;(x,\theta)\in\mathcal {Q}_0.
 \eee
 In particular,
 \bee
 u(x,\theta)= \overline{u}(x,\theta)=(e^{x}-K)^+,\;\;\;\;(x,\theta)\in[x_0,+\infty)\times(0,\theta_0].
 \eee
 By the definition of $S(x)$ we know
 \bee
 S(x)\geq \theta_0=S(x_0),\;\;\;\;x\geq x_0,
 \eee
 therefore the monotonicity of $S(x)$ is proved.

 Next we will prove the strict monotonicity of $S(x)$ on the set
 $\{x: 0<S(x)<T\}$. Suppose not (see Fig. 3). There exists $x_1<x_2$ such that
 \bee
 S(x)=\theta_0\in(0,T),\;\;\;\;x\in(x_1,x_2).
 \eee
 Thus
 \bee
 \left\{
 \begin{array}{ll}
 \p_\theta u(x,\theta)-\mathcal {L}u(x,\theta)=0,\;\;(x,\theta)\in(x_1,x_2)\times(\theta_0,T],
 \vspace{2mm}\\
 u(x,\theta_0)=(e^x-K),\;\;\;\;x\in(x_1,x_2).
 \end{array}
 \right.
 \eee
 Hence
 \be\label{4.5}
 \p_\theta u|_{\theta=\theta_0}&=&-\frac{1}{2}\gamma(1-\rho^2)c^2e^{2x}+(b-\rho
 c\la)e^x\nonumber\\
 &<&\Big[(b-\rho
 c\la)-\frac{1}{2}\gamma(1-\rho^2)c^2e^{x}\Big]e^{x_1}\nonumber\\
 &<&-\frac{1}{2}\gamma(1-\rho^2)c^2e^{2x_1}+(b-\rho
 c\la)e^{x_1}
 \leq0,
 \ee
 which contradicts (\ref{eq:ptu}), the two inequalities in (\ref{4.5}) is due to (\ref{4.4}).
 \end{proof}

 \begin{center}
 \begin{picture}(120,150)
 \thinlines
 \put(-120,70){\vector(1,0){150}}
 \put(-100,70){\vector(0,1){70}}
 \qbezier(15,140)(5,110)(-15,100)
 \qbezier(-80,70)(-60,78)(-45,100)
 \put(-45,100){\line(1,0){30}}
 \put(0,80){$\mathcal {S}$}
 \put(33,68){$x$}
 \put(-98,138){$\theta$}
 \put(-110,97){$\theta_0$}
 \multiput(-45,70)(0,5){6}{\line(0,1){3}}
 \multiput(-100,100)(5,0){12}{\line(1,0){3}}
 \multiput(-15,70)(0,5){6}{\line(0,1){3}}
 \put(-15,70){\circle*{2}}
 \put(-45,70){\circle*{2}}
 \put(-48,62){$x_1$}
 \put(-18,62){$x_2$}
 \put(-128,45){Fig. 3. No strict monotonicity of $S(x)$}
 \put(140,65){\vector(1,0){110}}
 \put(140,65){\vector(0,1){85}}
 \qbezier(225,150)(216,125)(200,115)
 \qbezier(165,65)(190,73)(200,95)
 \put(253,63){$x$}
 \put(132,144){$\theta$}
 \put(130,91){$\theta_2$}
 \put(130,111){$\theta_1$}
 \multiput(200,65)(0,5){10}{\line(0,1){3}}
 \multiput(140,95)(5,0){12}{\line(1,0){3}}
 \multiput(140,115)(5,0){12}{\line(1,0){3}}
 \put(200,115){\circle*{2}}
 \put(198,57){$x_0$}
 \put(120,45){Fig. 4. Discontinuity of $S(x)$}
 \end{picture}
 \end{center}

\vspace{-2cm}
 \begin{lemma}
 \label{lem:sxco}
 The free boundary $S(x)$ is
 continuous on the set $\{x: 0<S(x)<T\}$.
 \end{lemma}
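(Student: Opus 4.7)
The plan is to argue by contradiction, exploiting the strict monotonicity of $S$ from Lemma~\ref{lem:sxin} so that any discontinuity of $S$ on $\{x : 0 < S(x) < T\}$ must be a jump. Suppose $S$ has such a jump at $x_0$; then $\theta_2 := S(x_0^-) < S(x_0^+) =: \theta_1$, and closedness of $\mathcal{S}$ forces $S(x_0) = \theta_1$. For small $\delta > 0$, the open rectangle $\mathcal{R} := (x_0 - \delta, x_0) \times (\theta_2, \theta_1)$ lies in $\mathcal{C}$, while its right edge $\{x_0\} \times (\theta_2, \theta_1)$ lies in $\mathcal{S}$ (as sketched in Fig.~4). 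Two facts on this edge will drive the contradiction: first, $u(x_0, \theta) \equiv e^{x_0} - K$ is $\theta$-independent; second, since $u \in W^{2,1}_{p,loc}$ with $p \geq 3$, the parabolic Sobolev embedding makes $\partial_x u$ continuous, so smooth pasting $\partial_x u(x_0, \theta) = e^{x_0}$ holds across the free boundary and is likewise $\theta$-independent.

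The auxiliary function is $w := \partial_\theta u$, which is nonnegative by Lemma~\ref{lem:u} and vanishes on the right edge by the first fact above. Inside $\mathcal{C}$, interior parabolic bootstrap for the quasilinear equation $\partial_\theta u = \mathcal{L}u$ (whose coefficients depend only on the H\"older-continuous $\partial_x u$) gives $u \in C^\infty(\mathcal{C})$, so differentiating the PDE in $\theta$ produces a linear uniformly parabolic equation $Mw = 0$ in $\mathcal{R}$ with bounded coefficients. The parabolic strong minimum principle then forces a dichotomy. In Case A, $w > 0$ strictly in the interior of $\mathcal{R}$; the flat right edge satisfies the interior-sphere condition, so Hopf's lemma yields $\partial_x w(x_0, \theta) < 0$ along the edge. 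But the second fact plus Clairaut's equality of mixed partials gives $\partial_x w = \partial_\theta \partial_x u = 0$ there, a contradiction. In Case B, $w$ vanishes at some interior point $(x^*, \theta^*) \in \mathcal{R}$; the backward-propagation form of the parabolic strong minimum principle then forces $w \equiv 0$ on $\{(x, \theta) \in \mathcal{C} : x_0 - \delta < x < x_0, \ S(x) < \theta < \theta_1\}$, so $u(x, \cdot)$ is constant on each interval $(S(x), \theta_1)$. Continuity of $u$ at $\theta = S(x)^+$ pins this constant to $(e^x - K)^+$, contradicting $u > (e^x - K)^+$ in $\mathcal{C}$.

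The main technical hurdle is justifying the Hopf/Clairaut step of Case A with only the a priori $W^{2,1}_{p,loc}$ regularity: the mixed-partial identity and the classical Hopf lemma both nominally require more. This is resolved by noting that $u$ is in fact $C^\infty$ on the \emph{open} set $\mathcal{C}$ via parabolic bootstrap, so every calculation can be performed on compact subsets of $\mathcal{R}$ and then passed to the free boundary using the $C^1$ trace $\partial_x u \equiv e^{x_0}$ on the right edge.
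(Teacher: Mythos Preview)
Your proof is correct and follows essentially the same route as the paper's: both argue by contradiction, set $w=\partial_\theta u$, derive the linear parabolic equation for $w$ in the rectangle to the left of the jump, observe $w=\partial_{x\theta}u=0$ on the right edge, and invoke Hopf's lemma (equivalently the strong minimum principle dichotomy) to obtain a contradiction. Your write-up is simply more explicit about the regularity bookkeeping (bootstrap to $C^\infty$ in $\mathcal{C}$, smooth pasting for $\partial_x u$, the Clairaut identity) and about why the $w\equiv 0$ branch is impossible, but the underlying argument is the same as the paper's.
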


 \begin{proof} Suppose not (see Fig. 4). There
 would
 exist an $x_0$ such that
 \bee
 S(x_0):= \theta_1>\lim\limits_{x\rightarrow
 x_0^-}S(x):=\theta_2.
 \eee
 Then we
 would have
 \bee
 \p_\theta u(x_0,\theta)=\p_{x\theta}
 u(x_0,\theta)=0,\;\;\;\;\theta\in(\theta_2,\theta_1).
 \eee
 Since $\p_\theta u\geq0$ and in the domain $(x_0-\ep,x_0)\times(\theta_2,\theta_1)$
 \bee
 \p_\theta (\p_\theta u)-\frac{1}{2}c^2\p_{xx}(\p_\theta u)-\Big[b-\rho c\la-\frac{1}{2}c^2-\gamma(1-\rho^2)c^2\p_xu\Big]\p_x(\p_\theta
 u)=0.
 \eee
 Applying Hopf's Lemma \cite{Fri64} we obtain
 \bee
 \p_{x\theta}u(x_0,\theta)<0\;\;{\rm or}\;\;\p_\theta
 u\equiv0,\;\;(x,\theta)\in(x_0-\ep,x_0)\times(\theta_2,\theta_1),
 \eee
 which results in a contradiction.
 \end{proof}

 Since $S(x)$ is strictly increasing with respect to $x$ on the set
 $\{x: 0<S(x)<T\}$, then there exists an inverse function of $S(x)$,
 we denote it as $s(\theta)=S^{-1}(x),\;0<\theta<T$.
 Note that the strictly monotonicity of $S(x)$ is equivalent to the
 continuity of $s(\theta)$, and the continuity of $S(x)$ is equivalent
 to the strictly monotonicity of $s(\theta)$. Owing to Lemma~\ref{lem:sxin} and
 Lemma~\ref{lem:sxco}, we give the following theorem.

 \begin{lemma}
 \label{lem:fb}
 There exists an optimal exercising
 boundary $s(\theta): (0,T]\rightarrow \mathbb{R}$ such that
 \be\label{4.6}
 \mathcal {S}=\{(x,\theta)\in\mathcal {Q}: x\geq s(\theta)\}.
 \ee
 Moreover,  $s(\theta)$ is strictly increasing with respect to  $\theta$
 and
 \be
 &&s(\theta)\geq x_0,\label{4.7}\\
 &&s(0):= \lim\limits_{\theta\rightarrow0^+}s(\theta)
 =x_0,\label{4.8}
 \ee
 where $x_0=\left\{
 \begin{array}{ll}
 \ln K,&b-\rho c\la\leq0,
 \vspace{1mm}\\
 \max\{\ln K,\;\ln\frac{2(b-\rho c\la)}{\gamma(1-\rho^2)c^2}\},&b-\rho c\la>0.\end{array}
 \right.
 $
 In particular, $\p_\theta u$ is continuous across $s(\theta)$ and $s(\theta)\in C[0,T]\cap
 C^{\infty}(0,T]$.
 \end{lemma}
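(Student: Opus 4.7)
The plan is to assemble the preceding lemmas on $S(x)$, invert, and then establish the extra properties one at a time. First, Lemma~\ref{lem:sxin} asserts that $S$ is strictly increasing on $\{0<S<T\}$ and Lemma~\ref{lem:sxco} that $S$ is continuous there; these together guarantee that the inverse $s(\theta):=S^{-1}(\theta)$ is well-defined and continuous on its range. Combined with the slice representation $\mathcal{S}=\{(x,\theta):0<\theta\le S(x)\}$ from Lemma~\ref{lem:sx}, the monotonicity of $S$ (larger $x$ means larger stopping slice) converts the vertical-slice description into the horizontal description \eqref{4.6}, $\mathcal{S}=\{(x,\theta):x\ge s(\theta)\}$, and $s$ is strictly increasing.

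Next, I would prove the lower bound $s(\theta)\ge x_0$ by repeating the pointwise argument underlying \eqref{4.4}. On $\mathcal{S}$ we have $u=e^x-K$, so $\partial_\theta u=0$ and the variational inequality forces $-\mathcal{L}(e^x-K)\ge 0$, i.e.\ $(b-\rho c\lambda)e^x-\tfrac12\gamma(1-\rho^2)c^2 e^{2x}\le 0$. Combined with $e^x-K\ge 0$, this produces exactly the two bounds in the definition of $x_0$, hence $s(\theta)\ge x_0$. For the limit $s(0^+)=x_0$, monotonicity gives $s(0^+)\ge x_0$, so the task is a matching upper bound. I would argue by contradiction: if $s(0^+)>x_0$ then there is an $x^*\in(x_0,s(0^+))$ and a thin strip $\{x^*\}\times(0,\theta^*)$ lying in the continuation region. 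On this strip $\partial_\theta u=\mathcal{L}u$, and as $\theta\downarrow 0$, by continuity of $u$ and $\partial_x u$, $\mathcal{L}u\to\mathcal{L}(e^{x}-K)|_{x=x^*}=(b-\rho c\lambda)e^{x^*}-\tfrac12\gamma(1-\rho^2)c^2 e^{2x^*}<0$ because $x^*>x_0$. This contradicts \eqref{eq:ptu}.

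For the smooth fit $\partial_\theta u$ continuous across $s(\theta)$ I would rely on the Hopf-type argument already used in Lemma~\ref{lem:sxco}. Set $w=\partial_\theta u\ge 0$. In the interior of the stopping region $w\equiv 0$, so trivially $w\to 0$ from the right of the free boundary. Inside the continuation region $w$ satisfies a linear parabolic equation with bounded coefficients (using the bound \eqref{3.12} on $\partial_x u$). If there were a jump, i.e.\ $\limsup_{x\uparrow s(\theta)}w(x,\theta)>0$ for some $\theta$, then by the strong maximum principle applied in a box adjacent to the free boundary and the continuity of $u$ with $u=e^x-K$ on both sides, one obtains a contradiction identical in spirit to that of Lemma~\ref{lem:sxco}: either Hopf's lemma forces a strict sign on a normal derivative that conflicts with the smooth obstacle $e^x-K$, or $w\equiv 0$ in the adjacent neighbourhood, making the "continuation" region stopping.

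Finally, for $s\in C[0,T]\cap C^\infty(0,T]$, I would first upgrade continuity at $\theta=0$ from the already-proved $s(0^+)=x_0$ and continuity at any $\theta\in(0,T]$ from Lemma~\ref{lem:sxco}. The $C^\infty$ interior regularity then follows by a standard bootstrap in the free-boundary literature: with smooth fit, $u$ is $C^1$ across the boundary, and differentiating the free-boundary identity $u(s(\theta),\theta)=e^{s(\theta)}-K$ together with $\partial_x u(s(\theta),\theta)=e^{s(\theta)}$ and the parabolic equation in the continuation region gives an implicit equation for $s'(\theta)$ involving second derivatives of $u$, which are Hölder by Schauder theory; this yields $s\in C^1$, and iterating the argument gives $C^\infty$ smoothness. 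The most delicate obstacle is the smooth fit, since the structure of the operator $\mathcal{L}$ is non-linear in $\partial_x u$ and the Hopf argument must be checked carefully against the behaviour of $\partial_\theta u$ at the boundary.
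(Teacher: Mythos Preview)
Your handling of the inversion of $S$, the representation \eqref{4.6}, the lower bound \eqref{4.7}, and the limit \eqref{4.8} matches the paper's approach almost exactly; the paper likewise obtains \eqref{4.8} by reproducing the strict-monotonicity computation of Lemma~\ref{lem:sxin} at $\theta_0=0$. One minor slip: the phrase ``by continuity of $u$ and $\partial_x u$'' does not justify passing $\mathcal L u$ to the limit, since $\mathcal L$ contains $\partial_{xx}$; you need interior parabolic regularity in the continuation strip up to the smooth initial data $e^x-K$, which is available but should be said.

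The substantive divergence is in the final two claims. The paper does not argue the continuity of $\partial_\theta u$ across $s(\theta)$ or the $C^\infty$ smoothness of $s$ directly: it simply invokes Friedman \cite{Fried75} on one-dimensional parabolic variational inequalities, which delivers both conclusions once $\partial_\theta u\ge 0$ and the obstacle is a lower one. Your Hopf-type sketch for smooth fit is not a working argument as written: in Lemma~\ref{lem:sxco} Hopf's lemma is applied on a \emph{vertical} boundary segment where $\partial_\theta u=\partial_{x\theta}u=0$ are forced by the obstacle, whereas here the boundary is the curve $x=s(\theta)$ and you have not identified what plays the role of the vanishing normal derivative, nor established the spatial smooth fit $\partial_x u(s(\theta),\theta)=e^{s(\theta)}$ that your bootstrap for $C^\infty$ implicitly assumes. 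These gaps are not fatal---Friedman's theorem exists precisely to cover this situation---but as self-contained proofs they are incomplete, and citing \cite{Fried75} (as the paper does) is the efficient fix.
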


 \begin{proof} We can define
 \bee
 s(\theta)=
 \left\{
 \begin{array}{ll}
 S^{-1}(x),&0<\theta<T,
 \vspace{2mm}\\
 \inf\{x: S(x)=T\},&\theta=T.
 \end{array}
 \right.
 \eee
 Equation
 \eqref{4.6} is the consequence of the definition and monotonicity of
 $S(x)$.
 According to (\ref{4.4}) in the proof of Lemma~\ref{lem:sxin}, we know
 (\ref{4.7}) is true. Lemma~\ref{lem:sxin} implies $s(\theta)$ is monotonic increasing with
 respect to $\theta$, then we can define $s(0):=
 \lim\limits_{\theta\rightarrow0^+}s(\theta)$, and the proof of
 (\ref{4.8}) is similar to the proof of strictly monotonicity of
 $S(x)$ in Lemma~\ref{lem:sxin}.

 Moreover, since $\p_\theta u\geq0$ and
 $(e^x-K)^+$ is a lower obstacle, from \cite{Fried75} we know $\p_\theta u$ is
 continuous across $s(\theta)$ and $s(\theta)\in
 C^{\infty}(0,T]$.
 \end{proof}

\subsection{Main Result: Characterization}

Now, we are ready to present the complete characterization of the
value function $V$ and the indifference price $P$.
\begin{theorem}
  \label{thm:VP}
  \begin{enumerate}
  \item   The indifference price  of \eqref{eq:fwdpr} is independent of the
    initial capital $w$, and $P(w,y,t) := P(y,t)$ is the unique
    $W^{2,1}_{p,loc}(\mathcal{Q}^2)\cap C(\overline{\mathcal{Q}^2})$
    solution of     the variational inequality
    \eqref{eq:pdeP}-\eqref{eq:tcP}
    satisfying $|\partial_y P(y,t)|<C$ for some constant $C$.
  \item The value function of
    \eqref{eq:V}  has the form of
    $V (w,y,t) = U_t(w + P(y,t))$, and
    is the unique solution of
    \eqref{eq:pdeV}-\eqref{eq:tcV} in
    $W_{p,loc}^{2,2,1}( \mathcal{Q}^1)
    \cap C(\overline{\mathcal{Q}^1})$
    satisfying
    \begin{equation}
      \label{eq:cV}
      |\partial_y V(w,y,t)| < C |V(w,y,t)|.
    \end{equation}
  \item There exists $C^\infty$ function
    $y^*: [0,T) \to \mathbb{R}$, such that
    the strategy defined by
    $$\left\{
      \begin{array}{ll}
        \pi^*_\nu =
        - \frac{\rho c }{\sigma} \cdot Y_\nu \partial_y P (Y_\nu, \nu) +
        \frac{\lambda}{\sigma \gamma}, \ \ \nu\in (t,\tau^*),\\
        \tau^*=\inf\{\nu>t : Y_\nu\geq y^*(\nu)\} \wedge T,
      \end{array}\right.
    $$
    is optimal of the control problem \eqref{eq:V},
    in the sense that
    $$V(w,y,t) = \mathbb{E} [U_{\tau^*} (W^*_{\tau^*} +
    g(Y^{t,y}_{\tau^*}))| \mathcal{F}_t],$$
    where $W^*_{t_1} = w + \int_t^{t_1} \pi^*_\nu\sigma(\lambda d\nu
    +dB_{\nu}).$
  \end{enumerate}
\end{theorem}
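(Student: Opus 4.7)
The plan is to glue together the verification framework of Section~\ref{sec:vt} with the PDE solvability and free-boundary analysis developed in Section~\ref{sec:main}. Since Lemma~\ref{lem:vP} already reduces everything to producing a function $f$ satisfying \eqref{eq:pdeP}-\eqref{eq:tcP} with $|\partial_y f|$ uniformly bounded, the first step is to invert the change of variables $x=\ln y$, $\theta=T-t$, $u(x,\theta)=f(y,t)$ used to pass from \eqref{eq:pdeP}-\eqref{eq:tcP} to the forward problem \eqref{eq:pdeu}. Lemma~\ref{lem:u} supplies the unique $u\in W^{2,1}_{p,loc}(\mathcal{Q})\cap C(\overline{\mathcal{Q}})$ solving \eqref{eq:pdeu} with bounds \eqref{3.11}--\eqref{eq:ptu}. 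Setting $P(y,t):=u(\ln y,T-t)$ yields an element of $W^{2,1}_{p,loc}(\mathcal{Q}^2)\cap C(\overline{\mathcal{Q}^2})$ solving \eqref{eq:pdeP}-\eqref{eq:tcP}, and the estimate $0\le\partial_x u(x,\theta)\le e^{x+(b-\rho c\lambda)^+\theta}$ from \eqref{3.12} transfers via the chain rule to $0\le\partial_y P(y,t)=\partial_x u(\ln y,T-t)/y\le e^{(b-\rho c\lambda)^+ T}$, which is the uniform bound demanded in Part 1. Uniqueness of $P$ is inherited from the uniqueness assertion in Lemma~\ref{lem:u}.

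For Part 2, I would feed $f=P$ from Part 1 directly into Lemma~\ref{lem:vP}: its hypotheses---a $W^{2,1}_{p,loc}$ solution of \eqref{eq:pdeP}-\eqref{eq:tcP} with uniformly bounded $|\partial_y f|$---are exactly what Part 1 supplies. That lemma then delivers unique solvability of \eqref{eq:pdeV}-\eqref{eq:tcV} in $W^{2,2,1}_{p,loc}(\mathcal{Q}^1)$, the $w$-independence of the price, and the representation $V(w,y,t)=U_t(w+P(y,t))$. The gradient bound \eqref{eq:cV} is then automatic: differentiating the explicit formula gives $\partial_y V=-\gamma V\partial_y P$, so $|\partial_y V|\le\gamma e^{(b-\rho c\lambda)^+T}|V|$. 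Continuity up to $\overline{\mathcal{Q}^1}$ is inherited from continuity of $P$ on $\overline{\mathcal{Q}^2}$ together with the smoothness of $U_t$ in its arguments.

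For Part 3, the formula for $\pi^*$ is inherited verbatim from \eqref{eq:pistar} in Lemma~\ref{lem:vP}. To produce $y^*$ and $\tau^*$, I would invoke Lemma~\ref{lem:fb}, which supplies an optimal exercise boundary $s(\theta)\in C[0,T]\cap C^\infty(0,T]$ in the transformed variables with $\mathcal{S}=\{(x,\theta)\in\mathcal{Q}:x\ge s(\theta)\}$. Setting $y^*(t):=e^{s(T-t)}$ yields a $C^\infty$ function on $[0,T)$, and pulling the stopping region back to original variables gives $\{y\ge y^*(t)\}$. The only remaining task is to check that the $\tau^*$ in the theorem statement matches the first-exit time \eqref{eq:taus} from $\mathcal{C}[V]$: since $V(w,y,t)=U_t(w+P(y,t))$ and $U_t(w+\cdot)$ is strictly increasing, $V(w,y,t)>U_t(w+g(y))$ is equivalent to $P(y,t)>g(y)$, i.e.\ to $(\ln y,T-t)\in\mathcal{C}$, i.e.\ to $y<y^*(t)$. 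Hence $\mathcal{C}[V]=\{(w,y,t):y<y^*(t)\}$, and Lemma~\ref{lem:veri} identifies $\tau^*$ as optimal. The most delicate point is this chain of translations of the contact set between coordinate systems and between the $V$- and $P$-pictures, but since all maps involved are smooth homeomorphisms and $U_t$ is strictly monotone, no new estimate is required beyond what Sections~\ref{sec:vt} and \ref{sec:main} provide.
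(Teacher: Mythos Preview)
Your proposal is correct and follows essentially the same route as the paper: invoke Lemma~\ref{lem:u} for the existence, regularity, and gradient bound on $u$, pull these back through the change of variables to obtain $P$ with bounded $\partial_y P$, feed this into Lemma~\ref{lem:vP} (and thereby Lemma~\ref{lem:veri}) to get Parts~1--2, and then use Lemma~\ref{lem:fb} together with the strict monotonicity of $U_t$ to identify the continuation region and produce $y^*(t)=e^{s(T-t)}$ for Part~3. The paper's proof is terser but structurally identical, including the key step of rewriting $\mathcal{C}[V]$ as $\{P(y,t)>g(y)\}$ via monotonicity.
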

\begin{proof}
  By the fact $y \p_y f (y,t) = \p_x u(x,\tau)$  and estimates of
  \eqref{3.12}, we conclude $|\p_y f(y,t)| \le C$ for some constant $C$.
  Applying  Lemma~\ref{lem:u}, Lemma~\ref{lem:vP}, together with
  Lemma~\ref{lem:veri} in order, we conclude (1)-(2) of
  Theorem~\ref{thm:VP}.  Thanks to the representations of
  \eqref{eq:taus} and \eqref{eq:pistar}, the optimal control
  can be written as
  $$\left\{
    \begin{array}{ll}
      \pi^*_\nu =
      - \frac{\rho c }{\sigma} \cdot Y_\nu \partial_y P (Y_\nu, \nu) +
      \frac{\lambda}{\sigma \gamma}, \ \ \nu\in (t,\tau^*),\\
      \tau^* = \inf \{ \nu>t: (W_\nu^*, Y_\nu, \nu) \notin
      \mathcal{C}[V]\} \wedge T.
    \end{array}\right.
  $$
  Note that the continuation region of \eqref{eq:Q1}
  satisfies
  $$
  \begin{array}{ll}
    \mathcal{C}[V] &=
    \{(w,y,t) \in \mathcal{Q}^1 : V(w,y,t) >
    U_t(w + g(y))\} \\
    & =  \{(w,y,t) \in \mathcal{Q}^1 :
    U_t(w + P(y,t)) >
    U_t(w + g(y))\} \\
    & =  \{(w, y, t) \in \mathcal{Q}^1 :
    P(y,t)) > g(y)\}.
  \end{array}
  $$
  Therefore, the obstacles of $P$ and $V$ are the same, and
  the optimal stopping time $\tau^*$ can be written invariant
  to $W^*$:
  $$ \tau^* = \inf \{ \nu>t: P(Y_\nu, \nu) \le
  g(Y_\nu)\} \wedge T.$$
  Take  $y^*(t)=e^{s(T-t)}$,
  where $s(\cdot)$ is the function in  Lemma~\ref{lem:fb}.
  Thanks to the
  the result of  Lemma~\ref{lem:fb},
  together with
  the transformation between $P(y,t)$ and $u(x,\tau)$,
  we conclude the representation of  $\tau^*$.
\end{proof}

\section{Dual Representation and
Ramifications}\label{sec:dual} In this part, we will give an
alternative proof of the result on dual representation of the
indifference price given in Proposition 7 of \cite{LSZ11}. Thanks to
the characterization Theorem~\ref{thm:VP}, the proof based on the
stochastic control approach is rather straightforward.

Now we recall the market with   stock prices given by \eqref{eq:Y}
and \eqref{eq:stkS}. Let $\mathcal{Z}$ be the collection of all the
processes of the form
$$Z_t^\varphi =
\exp\Big\{-\frac 1 2 \lambda^2 t - \lambda B_t\Big\}
\exp\Big\{-\frac 1 2 \int_0^t \varphi_\nu^2 d \nu - \int_0^t
\varphi_\nu d \widetilde B_\nu \Big\},
$$
where $\varphi$ is some $\mathcal{F}_\nu$  progressively measurable
process satisfying $\int_0^T \varphi_\nu^2 d \nu <\infty.$ Then, the
collection of equivalent local martingale measures $\Lambda$ can be
written by
$$\Lambda = \{\mathbb{Q}^\varphi:
d  \mathbb{Q}^\varphi = Z_T^\varphi d\mathbb{P}, \ Z^\varphi \in
\mathcal{Z} \}.$$ Among of them, we refer $\mathbb{Q}^0$ to the
minimal martingale measure (MMM). Under $\mathbb{Q}^\varphi$, two
processes
\begin{equation}
  \label{eq:bm2}
  B_t^\lambda := B_t + \lambda t
  \quad \hbox{ and } \quad
  \widetilde B_t^\varphi : = \widetilde B_t + \int_0^t
  \varphi_\nu d\nu
\end{equation}
are both standard Brownian motions. Let $\mathbb{Q}_t^\varphi$ and
$\mathbb{P}_t$ be the probability measures of $\mathbb{Q}^\varphi$
and $\mathbb{P}$ restricted to $\mathcal{F}_t$. Then, we have
$$d \mathbb{Q}_t^\varphi = Z_t^\varphi d \mathbb{P}_t.$$

To proceed, we introduce the following concept. The relative entropy
of a probability measure $\mathbb{P}_1$ with respect to
$\mathbb{P}_2$ is defined by
$$
H(\mathbb{P}_1|\mathbb{P}_2) = \left\{
  \begin{array}{ll}
    \displaystyle
    \mathbb{E}^{\mathbb{P}_1}\Big[ \log
    \frac{d \mathbb{P}_1}{d \mathbb{P}_2}\Big],
    & \mathbb{P}_1 <<\mathbb{P}_2,\\
    \infty, & \hbox{ otherwise.}
  \end{array}
\right.
$$

\begin{example}{\rm The relative entropy $\mathbb{Q}^\varphi$
  with respect to $\mathbb{P}$ is
  $$
  \begin{array}{ll}
    H(\mathbb{Q}^\varphi|\mathbb{P})
    & = \displaystyle
    \mathbb{E}^{\mathbb{Q}^\varphi} \Big[ \log
    \frac{d \mathbb{Q}^\varphi}
    {d \mathbb{P}} \Big]
    =     \mathbb{E}^{\mathbb{Q}^\varphi}
    [ \log Z_T^\varphi ] \\
    & =
    \displaystyle
    \mathbb{E}^{\mathbb{Q}^\varphi} \Big[
    - \frac 1 2 \lambda^2 T -
    \lambda B_T -
    \frac 1 2 \int_0^T \varphi_\nu^2 d \nu
    - \int_0^T \varphi_\nu d \widetilde B_\nu
    \Big] \\ & \displaystyle
    = \frac 12 \lambda^2 T +
    \mathbb{E}^{\mathbb{Q}^\varphi} \Big[
    \frac 1 2 \int_0^T \varphi_\nu^2 d \nu
    \Big].
  \end{array}
$$ }
\end{example}

\begin{example} \label{exm:re} {\rm Given a stopping time
  $\tau \in \mathcal{T}_{0,T}$,
  the relative entropy $\mathbb{Q}^\varphi_\tau$
  with respect to  $\mathbb{Q}^0_\tau$
  is
  $$
  \begin{array}{ll}
    H(\mathbb{Q}^\varphi_\tau | \mathbb{Q}^0_\tau)
    & = \displaystyle
    \mathbb{E}^{\mathbb{Q}^\varphi}
    \Big[ \log
    \frac{d \mathbb{Q}^\varphi_\tau}
    {d\mathbb{Q}^0_\tau} \Big]
    =     \mathbb{E}^{\mathbb{Q}^\varphi}
    \Big[ \log \frac{Z_\tau^\varphi}
    {Z^0_\tau} \Big] \\ \\
    & =
    \displaystyle
    \mathbb{E}^{\mathbb{Q}^\varphi} \Big[
    - \frac 1 2 \int_0^\tau \varphi_\nu^2 d \nu
    - \int_0^\tau \varphi_\nu d \widetilde B_\nu
    \Big] \\ \\ & \displaystyle
    =
    \mathbb{E}^{\mathbb{Q}^\varphi} \Big[
    \frac 1 2 \int_0^\tau \varphi_\nu^2 d \nu
    \Big].
  \end{array}
$$}
\end{example}

\begin{proposition}
  \label{prop:dual}
  The indifference price $P(w,y,t)$ of \eqref{eq:fwdpr}
  admits following dual representation for all
  $(w,y) \in \mathbb{R} \times \mathbb{R}^+$,
  \begin{equation}
    \label{eq:dP}
  P(w,y,0) = \displaystyle
  \hbox{\rm esssup}_{\tau \in \mathcal{T}_{0,T}}
  \hbox{\rm essinf}_{\mathbb{Q}^\varphi \in \Lambda}
  \left\{
    \mathbb{E}^{\mathbb{Q}^\varphi}
    [ g(Y_\tau^{y,0}) ] +
    \frac 1 \gamma
    H(\mathbb{Q}^\varphi_\tau | \mathbb{Q}^0_\tau)
  \right\}.
  \end{equation}
\end{proposition}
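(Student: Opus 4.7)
The strategy is to derive the representation directly from the PDE characterization of $P$ in Theorem~\ref{thm:VP}, by applying the generalized It\^o formula (Proposition~\ref{prop:Ito}) to $P(Y_\nu,\nu)$ under each equivalent martingale measure $\mathbb{Q}^\varphi\in\Lambda$ and by recognizing the nonlinear term of $\mathcal{L}_2$ as a Legendre--Fenchel dual. Using \eqref{eq:bm2} and Girsanov, the $Y$-dynamics \eqref{eq:Y} become, under $\mathbb{Q}^\varphi$,
$$dY_\nu = \bigl(b - \rho c\lambda - c\sqrt{1-\rho^2}\,\varphi_\nu\bigr) Y_\nu\, d\nu + cY_\nu\bigl(\rho\, dB^\lambda_\nu + \sqrt{1-\rho^2}\, d\widetilde B^\varphi_\nu\bigr).$$
Introducing the candidate feedback $\varphi^*(y,t):=\gamma c\sqrt{1-\rho^2}\,y\,\partial_y P(y,t)$, completing the square yields the identity
$$\tfrac12\gamma(1-\rho^2)c^2 y^2 (\partial_y P)^2 - c\sqrt{1-\rho^2}\,\varphi\, y\,\partial_y P \;=\; \tfrac{1}{2\gamma}\bigl(\varphi^*(y,t) - \varphi\bigr)^2 \;-\; \tfrac{\varphi^2}{2\gamma},$$
so that It\^o's formula applied to $P(Y_\nu,\nu)$ under $\mathbb{Q}^\varphi$ produces the drift $\partial_t P + \mathcal{L}_2 P + \tfrac{1}{2\gamma}(\varphi^*(Y_\nu,\nu) - \varphi_\nu)^2 - \tfrac{\varphi_\nu^2}{2\gamma}$.

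For the ``$\le$'' direction I would fix an arbitrary $\varphi\in\mathcal{Z}$ and take $\tau=\tau^*$, the optimal stopping time provided by Theorem~\ref{thm:VP}. On $[0,\tau^*]$ the process $(Y_\nu,\nu)$ remains in the continuation region where $\partial_t P + \mathcal{L}_2 P = 0$ almost everywhere, while $P(Y_{\tau^*},\tau^*)=g(Y_{\tau^*})$ by the definition of $\tau^*$. Taking expectation under $\mathbb{Q}^\varphi$ and recognising $\mathbb{E}^{\mathbb{Q}^\varphi}\!\int_0^{\tau^*}\tfrac{\varphi_\nu^2}{2\gamma}d\nu$ as $\tfrac{1}{\gamma}H(\mathbb{Q}^\varphi_{\tau^*}|\mathbb{Q}^0_{\tau^*})$ via Example~\ref{exm:re}, I obtain
$$P(y,0) = \mathbb{E}^{\mathbb{Q}^\varphi}[g(Y_{\tau^*})] + \tfrac{1}{\gamma}H(\mathbb{Q}^\varphi_{\tau^*}|\mathbb{Q}^0_{\tau^*}) - \mathbb{E}^{\mathbb{Q}^\varphi}\!\int_0^{\tau^*} \tfrac{(\varphi^*(Y_\nu,\nu)-\varphi_\nu)^2}{2\gamma}\,d\nu.$$
Dropping the nonnegative last term, then taking $\inf_\varphi$ and $\sup_\tau$, gives one inequality.

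For the ``$\ge$'' direction I would take the feedback $\varphi^*_\nu:=\gamma c\sqrt{1-\rho^2}\,Y_\nu\,\partial_y P(Y_\nu,\nu)$. Uniform boundedness of $\partial_y P$ from Theorem~\ref{thm:VP}, combined with geometric Brownian moment bounds on $Y$, supplies Novikov's condition, so $\mathbb{Q}^{\varphi^*}\in\Lambda$. With this choice the quadratic correction $(\varphi^*-\varphi^*)^2$ vanishes identically, and \eqref{eq:pdeP} gives $\partial_t P + \mathcal{L}_2 P \le 0$ globally on $\mathcal{Q}^2$. Hence for every $\tau\in\mathcal{T}_{0,T}$,
$$P(y,0) - \mathbb{E}^{\mathbb{Q}^{\varphi^*}}[P(Y_\tau,\tau)] = -\mathbb{E}^{\mathbb{Q}^{\varphi^*}}\!\int_0^\tau \Bigl[\partial_t P + \mathcal{L}_2 P - \tfrac{(\varphi^*_\nu)^2}{2\gamma}\Bigr]d\nu \;\ge\; \tfrac{1}{\gamma}H(\mathbb{Q}^{\varphi^*}_\tau|\mathbb{Q}^0_\tau),$$
and combining with the obstacle inequality $P\ge g$ yields $P(y,0)\ge \mathbb{E}^{\mathbb{Q}^{\varphi^*}}[g(Y_\tau)]+\tfrac{1}{\gamma}H(\mathbb{Q}^{\varphi^*}_\tau|\mathbb{Q}^0_\tau)$; taking $\sup_\tau$ gives the reverse inequality, establishing \eqref{eq:dP}.

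The principal technical hurdle is to make the It\^o step rigorous: because $P$ only lies in $W^{2,1}_{p,loc}$ on the unbounded domain $\mathcal{Q}^2$, I would invoke Proposition~\ref{prop:Ito} on stopped intervals $[0,\tau\wedge\theta_n]$ where $\theta_n$ is the exit time of $Y$ from $[1/n,n]$, and then pass $n\to\infty$ by dominated convergence, controlled by the polynomial/exponential bounds on $P$ and $\partial_y P$ transferred from Lemma~\ref{lem:u} via the change of variables $x=\ln y$. A secondary verification is that $\mathbb{Q}^{\varphi^*}\in\Lambda$, which is immediate once $\partial_y P$ is known to be uniformly bounded and $Y$ is of geometric-Brownian type.
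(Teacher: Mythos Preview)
Your proposal is correct and reaches the result by essentially the same mechanism as the paper --- a verification argument pairing the PDE characterization of $P$ with It\^o's formula and the quadratic (Legendre--Fenchel) structure of $\mathcal{L}_2$ --- but the packaging differs. The paper does not apply It\^o under $\mathbb{Q}^\varphi$ directly. Instead it first uses Example~\ref{exm:re} to rewrite the entropy term as $\mathbb{E}^{\mathbb{Q}^\varphi}\bigl[\tfrac{1}{2\gamma}\int_t^\tau \varphi_\nu^2\,d\nu\bigr]$, then recasts the whole dual expression as a \emph{standard} control problem under~$\mathbb{P}$, by arguing that $(Y,\varphi)$ under $\mathbb{Q}^\varphi$ has the same law as a controlled process $(Y^{\widetilde\varphi},\widetilde\varphi)$ under~$\mathbb{P}$ (the ``$1$--$1$ map $\Gamma$'' step in \eqref{eq:con}). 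That standard control problem has HJB variational inequality~\eqref{eq:pded}, and the quadratic $\sup_\varphi\{\cdot\}$ collapses \eqref{eq:pded} to \eqref{eq:pdeP}--\eqref{eq:tcP}; Theorem~\ref{thm:VP} then identifies the value with~$P$. Your route is more explicit: you stay under $\mathbb{Q}^\varphi$, complete the square by hand, and produce the two inequalities by choosing $\tau^*$ for one direction and the feedback $\varphi^*$ for the other. Both approaches rest on precisely the same duality, so neither is more general; yours avoids the law-identification step $\Gamma$, which the paper treats rather briskly, at the cost of having to check directly that $\mathbb{Q}^{\varphi^*}\in\Lambda$.

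One caution on that last point: Novikov's condition is \emph{not} immediate here, since $|\varphi^*_\nu|\le C\,Y_\nu$ grows like a geometric Brownian motion and $\mathbb{E}\exp\bigl(C\int_0^T Y_\nu^2\,d\nu\bigr)$ need not be finite. You will want a criterion suited to linear-growth integrands against log-normal states (a Bene\v{s}-type argument, or pass to $x=\ln y$ so that the integrand is controlled by the bound $0\le y\,\partial_y P\le C$ coming from \eqref{3.12}). This is a routine upgrade rather than a gap in the strategy.
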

\begin{proof}
  Thanks to Example~\ref{exm:re}, it is enough
  to show that, for all $t\in [0,T]$
  $$
  P(w,y,t) = J(y,t) := \displaystyle
  \hbox{\rm esssup}_{\tau \in \mathcal{T}_{t,T}}
  \hbox{\rm essinf}_{\varphi \in \Phi}
  \left\{
    \mathbb{E}^{\mathbb{Q}^\varphi}_t
    \Big[ g(Y_\tau^{y,t})  +
    \frac 1 {2\gamma}
    \int_t^\tau \varphi_\nu^2 d\nu \Big]
  \right\},
  $$
  where $\Phi$ is the collection of
  all $\mathcal{F}_\nu$ progressively
  measurable process satisfying
  $\int_0^T \varphi_\nu^2 d \nu <\infty$,
  and
  $\mathbb{E}^{\mathbb{Q}^\varphi}_t[\ \cdot \ ] =
  \mathbb{E}^{\mathbb{Q}^\varphi}[\ \cdot \ | \mathcal{F}_t ]$.

  Recall that in \eqref{eq:bm2},
   both $B^\lambda$ and
  $\widetilde B^\varphi$ are Brownian motions under
  $\mathbb{Q}^\varphi$. We can
  rewrite the process $Y$ of
  \eqref{eq:Y} in terms of $B^\lambda$ and
  $\widetilde B^\varphi$,
  \begin{equation}
    \label{eq:Ylp}
    \begin{array}{ll}
      d Y_\nu & =
      (b - c\rho \lambda - c \sqrt{1- \rho^2} \varphi_\nu)
      Y_\nu d\nu + c\rho Y_\nu d B_\nu^\lambda +
      c \sqrt{1 - \rho^2} Y_\nu d \widetilde B_\nu^\varphi \\ \\
      & =
      (b - c\rho \lambda - c \sqrt{1- \rho^2} \varphi_\nu)
      Y_\nu d\nu + c Y_\nu d B_\nu^{\lambda, \varphi},
    \end{array}
  \end{equation}
  where $ B_\nu^{\lambda, \varphi}$ is another
  $\mathbb{Q}^\varphi$-Brownian motion.

  Since $\mathbb{Q}^\varphi \sim \mathbb{P}$ for arbitrary
  $\varphi \in \Phi$,
  \begin{center}
    $\int_0^T \varphi_\nu^2 d \nu <\infty$
    $\mathbb{P}$-almost surely
  \end{center}
  is equivalent to
  \begin{center}
    $\int_0^T \varphi_\nu^2 d \nu <\infty$
    $\mathbb{Q}^\varphi$-almost surely
  \end{center}
  and vice versa. In other words, there exists
  an 1-1 map $\Gamma: \Phi \mapsto \Phi$, such that
  the distribution of $\Gamma(\varphi)$ under $\mathbb{P}$
  is equal to the distribution of $\varphi$ under
  $\mathbb{Q}^\varphi$.
  Define
  \begin{equation}
    \label{eq:Yf}
    d Y^\varphi_\nu =
    (b - c\rho \lambda - c \sqrt{1- \rho^2} \varphi_\nu)
    Y^\varphi_\nu d\nu + c Y^\varphi_\nu d B_\nu.
  \end{equation}
  Comparing \eqref{eq:Yf} with \eqref{eq:Ylp},
  we conclude $Y$ under $\mathbb{Q}^\varphi$
  is equal to
  $Y^{\Gamma(\varphi)}$ under $\mathbb{P}$ in distribution.
  Therefore, we write $J(y,t)$ as a standard
  control problem of the following form,
  \begin{equation}
    \label{eq:con}
    \begin{array}{ll}
      J(y,t)
      &=
      \displaystyle
      \hbox{\rm esssup}_{\tau \in \mathcal{T}_{t,T}}
      \hbox{\rm essinf}_{\varphi \in \Phi}
      \left\{
        \mathbb{E}_t
        \Big[ g(Y_\tau^{\Gamma(\varphi), y,t})  +
        \frac 1 {2\gamma}
        \int_t^\tau \Gamma(\varphi)_\nu^2 d\nu \Big]
      \right\} \\
      & =
      \displaystyle
      \hbox{\rm esssup}_{\tau \in \mathcal{T}_{t,T}}
      \hbox{\rm essinf}_{\widetilde \varphi \in \Phi}
      \left\{
        \mathbb{E}_t
        \Big[ g(Y_\tau^{\widetilde \varphi, y,t})  +
        \frac 1 {2\gamma}
        \int_t^\tau \widetilde \varphi_\nu^2 d\nu \Big].
      \right\}
    \end{array}
  \end{equation}
  The second equality of \eqref{eq:con} follows from
  the fact $\Phi = \Gamma(\Phi)$.

  Applying exactly the same procedure of the
  verification theorem Lemma~\ref{lem:veri},
  we can conclude $J(y,t) = v(y,t)$,
  provided that there exists
  $v\in W_{3,loc}^{2,1}(\mathcal{Q}^2)$
  solves
  \begin{equation}
    \label{eq:pded}
    \left\{
      \begin{array}{ll}
        \min\Big\{-\partial_t v
        - \frac 1 2  c^2 y^2 \partial_{yy} v
        - (b - c\rho \lambda) y \partial_y v +
        \sup_{\varphi\in \mathbb{R}} \Big\{
        c \sqrt{1 - \rho^2} y \varphi \partial_y v
        - \frac 1 {2 \gamma} \varphi^2
        \Big\}, \\
        \hspace{3.0in} v(y,t) - g(y)
        \Big\} = 0, \ (y,t) \in \mathcal{Q}^2, \\
        v(y,T) = g(y), \ y\in \mathbb{R}^+.
      \end{array}\right.
  \end{equation}
  Thanks to  Theorem~\ref{thm:VP}(1),
  $P(w,y,t) = P(y,t)\in  W_{3,loc}^{2,1}(\mathcal{Q}^2)$
  solves \eqref{eq:pdeP}-\eqref{eq:tcP}.
  By utilizing
  the quadratic structure of
  $\sup_\varphi\{\ \cdot\ \}$ in \eqref{eq:pded}, one
  can check $P(y,t)$ also solves
  \eqref{eq:pded} . Therefore,
  $P(y,t) = v(y,t) = J(y,t)$.
\end{proof}

How does  the price $P(y,t)$ change, if we scale its payoff $g$ by
$n$ times, or if we change the risk aversion parameter $\gamma$? As
a result of the dual representation Proposition~\ref{prop:dual}, we
have the following properties on the price $P(y,t)$.

 \begin{proposition}
  \label{prop:est2}
  $P(y,t)$ decreases with respect to $\gamma$ and $\la$, and
  increases with respect to $b$,
  satisfying
  \be\label{eq:Pg}
  nP[g](y,t)\geq P[ng](y,t)\;\;(n\geq1),
  \ee
  where $P[\varphi]$ stands for the indifference
  price with the payoff function $\varphi$.
 \end{proposition}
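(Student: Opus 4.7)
The plan is to derive all four assertions directly from the dual representation established in Proposition~\ref{prop:dual}. It is most convenient to work with the equivalent form appearing in the proof of that proposition, namely
\[
P(y,t) = \hbox{\rm esssup}_{\tau \in \mathcal{T}_{t,T}} \hbox{\rm essinf}_{\widetilde\varphi\in\Phi} \mathbb{E}_t\Bigl[ g\bigl(Y_\tau^{\widetilde\varphi,y,t}\bigr) + \frac{1}{2\gamma} \int_t^\tau \widetilde\varphi_\nu^2 \, d\nu \Bigr],
\]
where $Y^{\widetilde\varphi,y,t}$ solves \eqref{eq:Yf} under the original measure $\mathbb{P}$. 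In this formulation, the parameters $b,\lambda$ enter only through the drift of $Y^{\widetilde\varphi}$, the parameter $\gamma$ enters only through the penalty coefficient, and the payoff $g$ enters only linearly inside the expectation; this separation is what makes the comparisons straightforward.

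The dependence on $\gamma$ is the simplest: since $\int_t^\tau \widetilde\varphi_\nu^2 \, d\nu \ge 0$, the bracketed integrand is non-increasing in $\gamma$ for every fixed $(\tau,\widetilde\varphi)$, and the monotonicity is preserved by $\hbox{\rm essinf}_{\widetilde\varphi}$ and $\hbox{\rm esssup}_\tau$. For $b$ and $\lambda$, I invoke pathwise comparison for the SDE \eqref{eq:Yf}: the drift $(b-c\rho\lambda-c\sqrt{1-\rho^2}\widetilde\varphi_\nu)Y^{\widetilde\varphi}$ is increasing in $b$ and, thanks to $c>0$ and $\rho\in[0,1)$, non-increasing in $\lambda$, while the diffusion coefficient $cY^{\widetilde\varphi}$ is Lipschitz and independent of $b$ and $\lambda$. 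The standard one-dimensional SDE comparison theorem then gives pathwise monotonicity of $Y_\nu^{\widetilde\varphi}$ in $b$ (increasing) and in $\lambda$ (decreasing), and since $g(y)=(y-K)^+$ is nondecreasing, the same monotonicity transfers to $\mathbb{E}_t[g(Y_\tau^{\widetilde\varphi})]$, while the penalty term is unaffected. Taking $\hbox{\rm essinf}_{\widetilde\varphi}$ and $\hbox{\rm esssup}_\tau$ then delivers the monotonicity of $P$ in $b$ and $\lambda$.

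For the sublinearity bound, fix $n\ge 1$ and compute, pointwise in $(\tau,\widetilde\varphi)$,
\[
n\Bigl[g\bigl(Y_\tau^{\widetilde\varphi}\bigr) + \frac{1}{2\gamma}\int_t^\tau \widetilde\varphi_\nu^2\, d\nu\Bigr] = (ng)\bigl(Y_\tau^{\widetilde\varphi}\bigr) + \frac{n}{2\gamma}\int_t^\tau \widetilde\varphi_\nu^2\, d\nu \;\ge\; (ng)\bigl(Y_\tau^{\widetilde\varphi}\bigr) + \frac{1}{2\gamma}\int_t^\tau \widetilde\varphi_\nu^2\, d\nu,
\]
where the inequality uses $n\ge 1$ together with $\widetilde\varphi_\nu^2\ge 0$. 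Applying $\hbox{\rm essinf}_{\widetilde\varphi}$ and then $\hbox{\rm esssup}_\tau$ to both ends yields $nP[g](y,t) \ge P[ng](y,t)$. I do not anticipate a serious obstacle here; the one mildly delicate point is verifying that the SDE comparison theorem applies under our unbounded admissible controls $\widetilde\varphi$, but this is routine because the drift and diffusion of \eqref{eq:Yf} are linear (hence locally Lipschitz) in the state variable for each fixed path of $\widetilde\varphi$.
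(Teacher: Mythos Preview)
Your proposal is correct and follows essentially the same route as the paper's primary proof: both derive all four assertions from the dual representation of Proposition~\ref{prop:dual}, using nonnegativity of the entropy/penalty term for the $\gamma$-monotonicity and the sublinearity \eqref{eq:Pg}, and the SDE comparison theorem applied to \eqref{eq:Yf} for the $b$- and $\lambda$-monotonicity. The paper additionally supplies an alternative PDE-based proof in the appendix (via comparison principles on the penalized truncated problems), but your argument coincides with the main proof given in the text.
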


 \begin{proof}
   Note that, Proposition~\ref{prop:dual} remains valid
   for the $P[ng]$, if we change $g$ into $ng$ in \eqref{eq:dP}.
   Since
   $H(\mathbb{Q}^\varphi_\tau | \mathbb{Q}^0_\tau)$ is non-negative,
   the monotonicity in $\gamma$ and the non-linearity of
   \eqref{eq:Pg}
   follow directly from Proposition~\ref{prop:dual}.

   Suppose $Y_i$ ($i = 1,2$) are two
   processes
   of \eqref{eq:Yf} related to $(b_i, \lambda_i)$.
   If $b_1\ge b_2$ and $\lambda_1\le \lambda_2$,
   then  $Y_1 \ge Y_2$ almost surely by
   comparison result of stochastic differential equation,
   and \eqref{eq:con} implies $P_1 \ge P_2$, where $P_i$ are
   the associated prices.
 \end{proof}

 An alternative proof of Proposition~\ref{prop:est2}
using a PDE approach is given in the Appendix, and it is
 interesting as its own right.
 In fact, Proposition~\ref{prop:est2}
 reveals natural economic facts.
 For instance, since $\gamma$ can be interpreted
 as the absolute risk aversion
 coefficient, proposition~\ref{prop:est2} claims that the employee's
 risk preference directly affect his exercise behavior.
 It implies a more prudent agent (with a larger coefficient of risk aversion
 $\gamma$) would be likely to exercise the option earlier to realize a
 cash benefit, and then invest it in other asset to
 earn the time value of the money obtained from the exercise of the
 option.
 So when he exercises, he
 gets less value than the risky agent,
 hence the value function decreases w.r.t. $\gamma$.
 In addition,
 \eqref{eq:Pg} implies that if the agent owns
 $n\ (n>1)$ pieces of options,
 compare with owning one piece of
 the option, the
 agent care less about the value of every piece
 of the option. So the average value of the
 indifference price about these pieces of
 options is less than the
 indifference price about one piece of this
 option.

As a straightforward consequence of \eqref{3.11} and \eqref{3.12},
we present the following estimates on $P(y,t)$ and $V(w,y,t)$.
\begin{proposition}
  \label{prop:est1}
  $P(y,t)$ and $V(w,y,t)$ satisfy
    \be
  &&(y-K)^+\leq P(y,t)\leq k(T-t)+(\ln y)^2+ye^{(b-\rho
    c\la)^+(T-t)}+1,\label{eq:pP}\\
  &&0\leq \p_yP(y,t)\leq e^{(b-\rho
    c\la)^+(T-t)},\label{eq:ppP}\\
  &&-e^{-\gamma(w+(y-K)^+)+\frac{1}{2}\la^2t}\leq V\leq -e^{-\gamma(w+k(T-t)+(\ln y)^2+ye^{(b-\rho
    c\la)^+(T-t)}+1)+\frac{1}{2}\la^2t},\label{eq:pV}\\
  &&0\leq \p_yV(w,y,t)\leq -\gamma e^{(b-\rho
    c\la)^+(T-t)}V(w,y,t),\label{eq:ppV}
  \ee
  where $k$ is defined in Lemma~\ref{lem:uen}.
\end{proposition}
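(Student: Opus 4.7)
The plan is to deduce all four estimates by transferring the bounds already established for $u(x,\theta)$ in Lemma~\ref{lem:u} to $P$ via the change of variables $x=\ln y$, $\theta=T-t$, and then to $V$ via the identity $V(w,y,t)=U_t(w+P(y,t))$ from Theorem~\ref{thm:VP}(2). There is no new analytic content; the whole proposition is a bookkeeping exercise, so I expect no real obstacle beyond getting the chain-rule factors right.

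First I would prove \eqref{eq:pP} and \eqref{eq:ppP}. By Lemma~\ref{lem:vP} and the transformation in Section~\ref{sec:main}, $P(y,t)=f(y,t)=u(\ln y, T-t)$. Substituting $x=\ln y$ and $\theta=T-t$ into \eqref{3.11} gives \eqref{eq:pP} directly. For the derivative, the chain rule yields $\partial_y P(y,t)=\frac{1}{y}\partial_x u(\ln y, T-t)$, and then \eqref{3.12} gives
\[
0\le \partial_y P(y,t)=\frac{1}{y}\partial_x u(\ln y, T-t)\le \frac{1}{y}\,e^{\ln y +(b-\rho c\lambda)^+(T-t)}=e^{(b-\rho c\lambda)^+(T-t)},
\]
which is \eqref{eq:ppP}.

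Next I would transfer these to $V$. From Theorem~\ref{thm:VP}(2), $V(w,y,t)=U_t(w+P(y,t))=-e^{-\gamma(w+P(y,t))+\frac{1}{2}\lambda^2 t}$. Since $a\mapsto U_t(a)$ is strictly increasing, the inequalities in \eqref{eq:pP} translate into \eqref{eq:pV} by monotonicity (no sign flips needed because the minus sign is absorbed into $U_t$). For \eqref{eq:ppV}, differentiating the identity $V=U_t(w+P)$ in $y$ gives
\[
\partial_y V(w,y,t)=\gamma\,e^{-\gamma(w+P(y,t))+\frac{1}{2}\lambda^2 t}\,\partial_y P(y,t)=-\gamma V(w,y,t)\,\partial_y P(y,t).
\]
Since $V<0$, the factor $-\gamma V$ is positive, so combining with \eqref{eq:ppP} yields $0\le \partial_y V\le -\gamma e^{(b-\rho c\lambda)^+(T-t)}V$, which is \eqref{eq:ppV}.

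The only thing to double-check is that the Sobolev regularity established for $u$ in Lemma~\ref{lem:u} and for $P$ in Theorem~\ref{thm:VP}(1) makes these pointwise identities legitimate almost everywhere and continuously where the functions are continuous; this is already guaranteed by $u\in W^{2,1}_{p,loc}(\mathcal{Q})\cap C(\overline{\mathcal{Q}})$ together with the continuity of $\partial_x u$ noted in the proof of Lemma~\ref{lem:u}, so no further work is required.
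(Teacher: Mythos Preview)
Your proposal is correct and matches the paper's own approach: the paper states the proposition as ``a straightforward consequence of \eqref{3.11} and \eqref{3.12}'' without further detail, and your argument is precisely the change-of-variables and chain-rule bookkeeping that substantiates that claim.
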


\section{Application to ESO Costs}\label{sec:appl}
In this part, we
briefly describe the application of the above result
to Employee stock options (ESO). An ESO is a call option on the
common stock of a company, issued as a form of non-cash
compensations. Compared to the American call, the main differences
of ESO are the vesting period and job termination risk.

Suppose the company's stock price follows $Y$ of \eqref{eq:Y}, which
is non-tradable in the market. Given a unit of ESO with maturity
$T$, strike $K$, and vesting period $t_v \in (0,T)$, its payoff is
equivalent to the conventional American call only if the exercise
time occurs between $t_v$ and $T$, otherwise zero. Therefore, we can
write its payoff
as
$$(Y_\tau - K)^+ I_{\{\tau\ge t_v\}}$$
for the exercise time $\tau \in \mathcal{T}_{0,T}$. In addition, if
the job termination is taken into account, we have the following
revised  payoff,
$$(Y_{\tau\wedge \tau^\alpha} - K)^+
I_{\{{\tau\wedge \tau^\alpha} \ge t_v\}},$$ where $\tau^\alpha$ is
the time of employee's job termination. We assume $\tau^\alpha$ is a
random variable
having exponential distribution with parameter $\alpha>0$, which is
independent of Brownian motions $B$, and  $\widetilde B$.

As suggested by FASB rules, the price of ESO is evaluated by
risk-neutral measure, under which stock price $Y$ is a martingale.
For simplicity, we assume $b = 0$ in \eqref{eq:Y}, and $\mathbb{P}$
is the risk-neutral measure specified above. Following the arguments
of \cite[Section 5]{LS09b}, the indifference price has a
representation given as follows.
\begin{enumerate}
\item When ESO survives throughout the vesting period,
that is, if $t\ge t_v$, then the ESO cost $C(\cdot)$ satisfies
$$
\begin{array}{ll}
C(y,t) & = \mathbb{E}^{\bar Q} [ (Y_{\tau^*\wedge \tau^\alpha} -
K)^+]
\\
& = \mathbb{E}^{\bar Q} [ e^{-\alpha (\tau^*-t)} (Y_{\tau^*} - K)^+
+ \int_t^{\tau^*} \alpha (Y_\nu - K)^+ e^{-\alpha (\nu-t)} d \nu],
\end{array}
$$
where $\tau^*=\inf\{\nu>t : Y_\nu\geq y^*(\nu)\}\wedge T$ is optimal
exercise time in Theorem~\ref{thm:VP}. This corresponds to the
following PDE characterization: $C(y,t)$ is the unique $C^{2,1}$
solution of
  \begin{equation}
  \label{eq:cyt}
  \left\{
  \begin{array}{ll}
  \partial_t C + \frac 1 2 c^2 y^2 \p_{yy} C - \alpha C
  + \alpha(y - K)^+ = 0, \quad (y,t) \in \mathcal{Q}^2 \cap
  \{y < y^*(t)\} \cap \{t_v\le t<T\},\\
  C(y,t) = (y - K)^+, \quad (y,t) \in \mathcal{Q}^2 \cap
  \{y\geq y^*(t)\} \cap \{ t_v\le t<T\},\\
  C(y,T)=(y-K)^+,\quad y\in \mathbb{R}^+,
  \end{array} \right.
  \end{equation}
  since the boundary curve $y^*(t)$ is smooth due to
  Theorem~\ref{thm:VP} and PDE is non-degenerate locally.
\item If $t < t_v$ and ESO is alive at the moment, then the
  ESO cost is
  $$C(y,t) = \mathbb{E}^{\bar Q} [ C(Y_{t_v}, t_v)
  I_{\{\tau^\alpha  >t_v\}}]
  =  \mathbb{E}^{\bar Q} [e^{-\alpha (t_v-t)} C(Y_{t_v},t_v)].$$
  Thus, $C(y,t)$ follows,
  $$\partial_t C + \frac 1 2 c^2 y^2 \p_{yy} C - \alpha C = 0,\quad
  (y,t)\in\mathbb{R}^+\times[0,t_v),$$
  with terminal condition $C(y, t_v)$ given by
  the solution of \eqref{eq:cyt}.
\end{enumerate}

\noindent{\bf Acknowledgement} We thank Professor Nicolai V. Krylov
for the discussion on the generalized Ito's formula.

\appendix
\section{Appendix}\label{sec:app}

\subsection{A Generalized It\^o Formula}
We present a
generalized It\^o formula to the functions in Sobolev spaces
which is an extension of the result in \cite{Kry80}. We
thank Prof. Krylov
for confirming this result
by email communication.

\begin{proposition} \label{prop:Ito}
  Let $X$ be a diffusion on the filtered probability
  space
  $(\Omega,\mathcal{F}, \mathbb{P},
  \mathbb{F}= \{\mathcal{F}_t\})$, with
  generator $L$,  initial time $s$,
  and initial state $x\in \mathbb{R}^d$.
  Suppose $v\in W_{d+1,loc}^{2,1}(Q)$ for some open
  set $Q\subset \mathbb{R}^{d+1}$. Define
  $$\tau_Q^{s,x} = \inf\{r>s: X_r \notin Q\}.$$
  Then, for any $\mathbb{F}$-stopping time
  $\tau \le \tau_Q^{x,s}$, we  have
  \begin{equation}
    \label{eq:3}
    \mathbb{E}[v(X_\tau, \tau)] =
    v(x,s) + \mathbb{E} \Big[ \int_s^\tau L v(X_r,r) dr\Big].
  \end{equation}
\end{proposition}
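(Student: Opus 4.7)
The plan is to combine a standard localization-and-mollification scheme with Krylov's $L^{d+1}$ occupation-time estimate for non-degenerate diffusions. I would first exhaust $Q$ by bounded open sets $Q_1 \subset\subset Q_2 \subset\subset \cdots$ with $\bigcup_k Q_k = Q$, and introduce localized stopping times $\tau_k = \tau \wedge \inf\{r>s : (X_r, r) \notin Q_k\}$. Since $\tau \le \tau_Q^{s,x}$, we have $\tau_k \uparrow \tau$ almost surely, so it is enough to establish the identity with $\tau_k$ in place of $\tau$ for each fixed $k$ and then send $k \to \infty$: the boundary term converges by continuity of $v$ (justified below) together with bounded convergence, and the integral passes to the limit by monotone/dominated convergence.

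On each $\overline{Q_k}$, I would approximate $v$ by smooth $v_\epsilon \in C^\infty(\overline{Q_k})$ via standard parabolic mollification, so that $v_\epsilon \to v$ in $W^{2,1}_{d+1}(Q_k)$. Since $p = d+1$ satisfies $2p > d+2$ for $d \ge 1$, the parabolic Sobolev embedding $W^{2,1}_{d+1}(Q_k) \hookrightarrow C^{\alpha, \alpha/2}(\overline{Q_k})$ supplies a continuous representative for $v$, and the $W^{2,1}_{d+1}$-convergence upgrades to uniform convergence $v_\epsilon \to v$ on $\overline{Q_k}$. Applying the classical It\^o formula to the smooth $v_\epsilon$ and stopping at $\tau_k$ (on which $\nabla v_\epsilon$ is bounded, so the stochastic-integral part is a true martingale with zero expectation) yields
$$
\mathbb{E}[v_\epsilon(X_{\tau_k}, \tau_k)] = v_\epsilon(x,s) + \mathbb{E}\Big[\int_s^{\tau_k} L v_\epsilon(X_r, r)\, dr\Big],
$$
and the first two terms converge to their $v$-analogues as $\epsilon \to 0$ by the uniform convergence above.

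The decisive step is the integral term, which is precisely where the exponent $p = d+1$ is forced. The tool is Krylov's $L^{d+1}$ estimate: for the non-degenerate diffusion $X$ and any $f \in L^{d+1}(Q_k)$,
$$
\mathbb{E}\Big[\int_s^{\tau_k} |f(X_r, r)|\, dr\Big] \le C_k\,\|f\|_{L^{d+1}(Q_k)}.
$$
Applied with $f = L v_\epsilon - L v$, whose $L^{d+1}(Q_k)$-norm tends to zero, this delivers the needed convergence of the integral and closes the identity at level $k$. The main obstacle, both technical and historical, is this Krylov estimate itself --- the probabilistic counterpart of the Aleksandrov--Bakelman--Pucci maximum principle --- which dictates the critical exponent $p = d+1$, matching the paper's requirement $p = 3$ in the relevant setting of spatial dimension $d = 2$. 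The upgrade from Krylov's one-sided inequality to a two-sided equality is essentially automatic here: because $W^{2,1}_{d+1}$ is a linear space, applying the one-sided version to $-v$ yields the reverse inequality, or equivalently the symmetric mollifiers $v_\epsilon$ approximate $v$ from both directions at once.
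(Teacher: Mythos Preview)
Your argument is correct, but it is considerably more elaborate than what the paper does. The paper's proof is two lines: it quotes Krylov's one-sided inequality (Theorem 2.10.2 of \cite{Kry80}),
\[
\mathbb{E}[v(X_\tau,\tau)] \le v(x,s) + \mathbb{E}\Big[\int_s^\tau Lv(X_r,r)\,dr\Big],
\]
as a black box, then applies it to $-v$ to obtain the reverse inequality, and combines. You instead reconstruct the mechanism behind Krylov's theorem --- localization, mollification, classical It\^o on $v_\epsilon$, and passage to the limit via the $L^{d+1}$ occupation-time estimate --- thereby arriving at the equality directly rather than via two one-sided bounds. What your route buys is transparency about \emph{why} the exponent $p=d+1$ is forced (it is exactly the integrability threshold in Krylov's estimate), and a self-contained argument that does not require the reader to locate Theorem 2.10.2. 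What the paper's route buys is brevity: once one is willing to cite Krylov's result, the sign-flip observation you yourself note at the end is the entire proof. Both approaches rest on the same deep input, namely the Krylov/ABP-type $L^{d+1}$ estimate for non-degenerate diffusions.
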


\begin{proof}
  For any $\mathbb{F}$-stopping
  time $\tau \le \tau_Q^{s,x}$,
  by Theorem 2.10.2 of \cite{Kry80},
   we have
  \begin{equation}
    \label{eq:1}
    \mathbb{E}[v(X_\tau, \tau)] \le
    v(x,s) + \mathbb{E} \Big[
    \int_s^\tau L v(X_r,r) dr\Big].
  \end{equation}
  If we apply \eqref{eq:1} to a function $v_1 = -v$, then
  $$  \mathbb{E}[v_1(X_\tau,\tau)] \le
  v_1(x,s) + \mathbb{E} \Big[
  \int_s^\tau L v_1(X_r,r) dr\Big],
  $$
  which yields
  \begin{equation}
    \label{eq:2}
    \mathbb{E}[v(X_\tau, \tau)] \ge
    v(x,s) + \mathbb{E} \Big[
    \int_s^\tau L v(X_r,r) dr\Big].
  \end{equation}
  From \eqref{eq:1} and \eqref{eq:2},
  we conclude equality holds for
  \eqref{eq:1}.
\end{proof}

\subsection{Proof of Proposition~\ref{prop:est2}}
One can conclude the result of Proposition~\ref{prop:est2} by the
following two lemmas.
 \begin{lemma}
 \label{lem:ulaga}
 The solution $u(x,\theta)$ to the problem \eqref{eq:pdeu}
 decreases with respect to  $\gamma$ and $\la$, and increases
 with respect to $b$.
 \end{lemma}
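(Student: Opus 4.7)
The plan is to prove each monotonicity statement by a comparison argument on the set where the two solutions differ, exploiting the fact that on the side where the larger solution strictly exceeds the obstacle we have the PDE equality, while the other solution always satisfies a one-sided inequality. Concretely, for two solutions $u_1,u_2$ of \eqref{eq:pdeu} corresponding to parameter tuples $(\gamma_i,\la_i,b_i)$ ordered so that we expect $u_1\ge u_2$, set $w=u_2-u_1$ and consider the open set $\mathcal{N}=\{(x,\theta)\in\mathcal{Q}: w(x,\theta)>0\}$. On $\mathcal{N}$ we have $u_2>u_1\ge(e^x-K)^+$, so $u_2$ lies in its continuation region and therefore $\p_\theta u_2=\mathcal{L}_{(\gamma_2,\la_2,b_2)} u_2$; meanwhile $u_1$ is always a supersolution, so $\p_\theta u_1\ge \mathcal{L}_{(\gamma_1,\la_1,b_1)} u_1$. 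Subtracting gives $\p_\theta w\le \mathcal{L}_{(\gamma_2,\la_2,b_2)} u_2-\mathcal{L}_{(\gamma_1,\la_1,b_1)} u_1$ on $\mathcal{N}$.

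The key algebraic step is to rewrite the right-hand side as a linear parabolic operator applied to $w$ plus a remainder that has a favorable sign under the parameter ordering. For the $\gamma$ case (fix $\gamma_1\le\gamma_2$, common $\la,b$), one writes the difference of the quadratic terms as
\bee
 -\tfrac12(1-\rho^2)c^2\gamma_2(\p_xu_1+\p_xu_2)\p_xw-\tfrac12(1-\rho^2)c^2(\gamma_2-\gamma_1)(\p_xu_1)^2,
\eee
and the last piece is nonpositive by $\p_xu_1\ge0$ (Lemma~\ref{lem:u}). The remaining terms assemble into a linear operator
\bee
 \mathcal{A}w:=\p_\theta w-\tfrac12 c^2\p_{xx}w-\bigl[b-\rho c\la-\tfrac12 c^2-\tfrac12\gamma_2(1-\rho^2)c^2(\p_xu_1+\p_xu_2)\bigr]\p_xw,
\eee
and we obtain $\mathcal{A}w\le 0$ on $\mathcal{N}$. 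For the $\la$ case the extra sign-good remainder is $-\rho c(\la_2-\la_1)\p_xu_1\le 0$, and for the $b$ case it is $(b_2-b_1)\p_xu_1\le 0$; in each case $\mathcal{A}w\le0$ on $\mathcal{N}$ with $w=0$ on $\p_p\mathcal{N}$ (since $w$ vanishes on $\{\theta=0\}$ and on the relative boundary $\{w=0\}$).

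The conclusion is then by the maximum principle applied to the linear operator $\mathcal{A}$ on $\mathcal{N}$, which forces $w\le 0$ on $\mathcal{N}$ and contradicts the definition of $\mathcal{N}$ unless $\mathcal{N}=\emptyset$. Since this argument is the same contradiction scheme used in the uniqueness proof of Lemma~\ref{lem:un}, it suffices to verify the hypotheses of the appropriate maximum principle (e.g.\ the one in \cite{Tso85}) for $\mathcal{A}$.

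The main obstacle is exactly the unboundedness of the domain $\mathcal{Q}=\mathbb{R}\times(0,T]$ together with the fact that the drift coefficient of $\mathcal{A}$ contains $\p_xu_1+\p_xu_2$, which by \eqref{3.12} grows like $e^{x+(b-\rho c\la)^+\theta}$. To handle this I would run the whole comparison argument first at the truncated level on $\mathcal{Q}_N$ using the solutions $u_1^N,u_2^N$ of Lemma~\ref{lem:un} (whose parabolic-boundary data on $x=\pm N$ are parameter independent, so $w$ still vanishes there), where the maximum principle of \cite{Tso85} applies directly as in the uniqueness proof; the monotonicity is then preserved upon passing $N\to\infty$ via the weak-$W^{2,1}_{p,loc}$ convergence $u_i^N\rightharpoonup u_i$ established in Lemma~\ref{lem:u}. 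This yields $u_1\ge u_2$ on all of $\mathcal{Q}$ for each of the three parameter orderings, proving the lemma.
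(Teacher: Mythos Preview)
Your argument is correct, but it differs from the paper's in the level at which the comparison is carried out. The paper works at the \emph{penalized} level $u^N_{\ep}$: there both functions satisfy a genuine PDE on all of $\mathcal{Q}_N$, so the difference $w=u^N_{\ep(1)}-u^N_{\ep(2)}$ solves a linear parabolic equation (with the nonnegative zeroth-order coefficient $\beta'_\ep(\cdot)$) and a sign-favorable right-hand side; the maximum principle with the Neumann data gives $w\le 0$ directly, and one then lets $\ep\to 0$ and $N\to\infty$. You instead run the comparison at the variational-inequality level $u^N$, invoking the same contradiction-on-$\mathcal{N}$ scheme as in the uniqueness proof of Lemma~\ref{lem:un}, and then pass only $N\to\infty$. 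Both routes are valid; the paper's avoids the obstacle altogether (no set $\mathcal{N}$ is needed), while yours is more economical in that it skips the penalization layer and reuses a device already in the paper.

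One small correction: on $\{x=\pm N\}$ the data are Neumann, so it is $\p_x w$ that vanishes there, not $w$ itself. This is exactly how the boundary is handled in the uniqueness proof you cite, so your appeal to that scheme still goes through; just replace ``$w$ still vanishes there'' by ``$\p_x w=0$ there.''
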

 \begin{proof}
    We may prove all three monotonicities in the same
 way by comparison principle. Hence we only present the proof
 of the monotonicity
 of $u(x,\theta)$  with respect to  $\gamma$.
 Suppose $\gamma_1>\gamma_2$, and $u^N_{\ep(i)}(x,\theta)(i=1,2)$ is the solution to
 the following problem
 \bee
 \left\{
  \begin{array}{ll}
    \p_\theta u^N_{\ep(i)}-\frac{1}{2}c^2\p_{xx} u^N_{\ep(i)}-(b-\rho
    c\la-\frac{1}{2}c^2)\p_xu^N_{\ep(i)}\\
    +\frac{1}{2}\gamma_i(1-\rho^2)c^2(\p_xu^N_{\ep(i)})^2+
    \beta_{\ep}(u^N_{\ep(i)}-\pi_\ep(\cdot))=0,\;
    &(x,\theta)\in \mathcal{Q}_N,
    \vspace{2mm}\\
    \p_xu^N_{\ep(i)}(-N,\theta)=0,\;\p_xu^N_{\ep(i)}(N,\theta)=e^N,&\theta\in(0,T],
    \vspace{2mm}\\
    u^N_{\ep(i)}(x,0)=\pi_\ep(e^x-K),&x\in(-N,N),
  \end{array}
 \right.
 \eee
 Set $w(x,\theta):=u^N_{\ep(1)}(x,\theta)-u^N_{\ep(2)}(x,\theta)$. Then
 $w(x,\theta)$ satisfies
 \bee
 &&\p_\theta w-\frac{1}{2}c^2\p_{xx}w-(b-\rho
    c\la-\frac{1}{2}c^2)\p_xw
    +\frac{1}{2}\gamma_1(1-\rho^2)c^2(\p_xu^N_{\ep(1)}+\p_xu^N_{\ep(2)})\p_xw\\
    &&\qquad+
    \beta'_{\ep}(\cdot)w
 =\frac{1}{2}(\gamma_2-\gamma_1)(1-\rho^2)c^2(\p_xu_{\ep(2)}^N)^2\leq0.
 \eee
 Combining with the initial and boundary conditions, we have
 \bee
 u^N_{\ep(1)}(x,\theta)-u^N_{\ep(2)}(x,\theta)=w(x,\theta)\leq0.
 \eee
Letting $\ep\rightarrow0,\;N\rightarrow+\infty$, we know
 $u(x,\theta)$ is decreasing w.r.t. $\gamma$.
 \end{proof}

 \begin{lemma}
 \label{lem:ug}
 The solution to
 problem \eqref{eq:pdeu} satisfies
 \be\label{eq:ug}
 nu[\widetilde{g}(x)]\geq u[n\widetilde{g}(x)]\;\;(n\geq1),
 \ee
 where $\widetilde{g}(x)=(e^x-K)^+$, and $u[\widetilde{g}]$
 represents the solution to
 problem \eqref{eq:pdeu} with the obstacle
 and initial condition $\widetilde{g}$.
 \end{lemma}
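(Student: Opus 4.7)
The plan is to mirror the comparison strategy used in Lemma~\ref{lem:ulaga}: prove \eqref{eq:ug} first at the level of the penalized, truncated approximations from \eqref{3.2}, then pass to the limits $\varepsilon\to 0$ and $N\to\infty$ via Lemmas~\ref{lem:un} and~\ref{lem:u}. Concretely, let $u^N_{\varepsilon,1}$ denote the solution of \eqref{3.2} with obstacle $\pi_\varepsilon(e^x-K)$ and boundary flux $\partial_x u^N_{\varepsilon,1}(N,\cdot)=e^N$, and let $u^N_{\varepsilon,2}$ denote the analogous penalized solution with obstacle $n\pi_\varepsilon(e^x-K)$ and boundary flux $\partial_x u^N_{\varepsilon,2}(N,\cdot)=ne^N$ (with $C_0$ rechosen as in \eqref{eq:C0} but scaled to accommodate the new obstacle). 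The goal is to show $v:=nu^N_{\varepsilon,1}\geq u^N_{\varepsilon,2}$ on $\mathcal{Q}_N$.

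The core computation exploits the concavity of $\mathcal{L}$ in $\partial_x$: by direct expansion,
\[\mathcal{L}(n\varphi)-n\mathcal{L}(\varphi)=-\tfrac{n(n-1)}{2}\gamma(1-\rho^2)c^2(\partial_x\varphi)^2\leq 0\quad(n\geq 1).\]
Applying this to $\varphi=u^N_{\varepsilon,1}$ and substituting the penalized equation it satisfies, I compute
\[\partial_\theta v-\mathcal{L}v+\beta_\varepsilon(v-n\pi_\varepsilon(e^x-K))=\tfrac{n(n-1)}{2}\gamma(1-\rho^2)c^2(\partial_xu^N_{\varepsilon,1})^2+\bigl[\beta_\varepsilon(ns)-n\beta_\varepsilon(s)\bigr],\]
where $s:=u^N_{\varepsilon,1}-\pi_\varepsilon(e^x-K)\geq 0$ by \eqref{3.3}. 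The first bracket is clearly non-negative; the second is also non-negative, since $\beta_\varepsilon\leq 0$ is non-decreasing, hence $\beta_\varepsilon(ns)\geq\beta_\varepsilon(s)\geq n\beta_\varepsilon(s)$ for $s\geq 0$ and $n\geq 1$. Thus $v$ is a supersolution of the penalized equation for $u^N_{\varepsilon,2}$, and its initial data $n\pi_\varepsilon(e^x-K)$ and lateral flux data match those of $u^N_{\varepsilon,2}$ exactly.

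With this in hand, I invoke the comparison principle for the penalized equation in the spirit of Lemma~\ref{lem:ulaga}: form $w:=v-u^N_{\varepsilon,2}$, linearize the quadratic gradient nonlinearity using the uniform bound \eqref{3.4} on $\partial_x u^N_{\varepsilon,j}$ (so that $w$ satisfies a linear parabolic inequality with bounded coefficients), and apply the maximum principle with the given boundary and initial data to conclude $w\geq 0$, i.e., $nu^N_{\varepsilon,1}\geq u^N_{\varepsilon,2}$. Letting $\varepsilon\to 0$ along the subsequence of Lemma~\ref{lem:un} yields $nu^N[\tilde g]\geq u^N[n\tilde g]$, and then letting $N\to\infty$ via the $W^{2,1}_{p,loc}$ weak convergence of Lemma~\ref{lem:u} produces \eqref{eq:ug}.

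The main obstacle is the bookkeeping at the penalty level: one must verify the sign of $\beta_\varepsilon(ns)-n\beta_\varepsilon(s)$ (which, crucially, relies on $s\geq 0$ and on $\beta_\varepsilon\leq 0$) and coordinate the constants $C_0$ in the two penalizations so that Lemma~\ref{lem:uen}'s sub/super-solution structure applies uniformly. Once these are handled, the passage to the limit is routine and the desired sublinearity follows.
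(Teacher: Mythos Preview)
Your approach is correct and rests on the same core observation as the paper's: the concavity of $\mathcal{L}$ in $\partial_x$, i.e.\ $\mathcal{L}(n\varphi)-n\mathcal{L}(\varphi)=-\tfrac{n(n-1)}{2}\gamma(1-\rho^2)c^2(\partial_x\varphi)^2\le 0$, which makes $nu[\widetilde g]$ a supersolution for the problem with obstacle $n\widetilde g$. The difference is purely one of level of approximation. You mirror Lemma~\ref{lem:ulaga} and work at the penalized stage, which forces you to handle the penalty term explicitly via the (correct) inequality $\beta_\varepsilon(ns)\ge\beta_\varepsilon(s)\ge n\beta_\varepsilon(s)$ for $s\ge 0$, $n\ge 1$, and then pass $\varepsilon\to 0$, $N\to\infty$. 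The paper instead skips the penalty and compares directly at the variational-inequality level on $\mathcal{Q}_N$: it records that $\widetilde u^N:=nu^N[\widetilde g]$ satisfies $\min\{\partial_\theta\widetilde u^N-\mathcal{L}\widetilde u^N-\tfrac12\gamma n(n-1)(1-\rho^2)c^2(\partial_x u^N[\widetilde g])^2,\ \widetilde u^N-n\widetilde g\}=0$, which immediately makes $\widetilde u^N$ a supersolution of the obstacle problem for $u^N[n\widetilde g]$, and then invokes the comparison already established in Lemma~\ref{lem:un}. Your route is a bit longer but arguably more self-contained; the paper's is terser because it can lean on the VI comparison it has already proved. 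The bookkeeping concern you flag about $C_0$ is harmless: since $C_0$ only serves as a lower bound for $-\beta_\varepsilon(0)$ in the subsolution check, you may take a single $C_0$ large enough for both obstacles and use the same $\beta_\varepsilon$ in both penalizations.
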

 \begin{proof}
 Set $\widetilde{u}(x,\theta):=nu[\widetilde{g}(x)]$, then
 $\widetilde{u}(x,\theta)$ satisfies
 \bee
  \left\{
   \begin{array}{ll}
   \min\{\p_\theta \widetilde{u}- \mathcal{L}
     \widetilde{u}-\frac{1}{2}\gamma
     n(n-1)(1-\rho^2)c^2(\p_xu[\widetilde{g}])^2,\;\widetilde{u}-n\widetilde{g}\}=0,
     \vspace{2mm}\\
     \widetilde{u}(x,0)=n\widetilde{g}(x).
   \end{array}
 \right.
 \eee
 Note that $u[n\widetilde{g}(x)]$ satisfies
  \bee
  \left\{
   \begin{array}{ll}
   \min\{(\p_\theta u- \mathcal{L}
     u)(x,\theta),\;u-n\widetilde{g}\}=0,
     \vspace{2mm}\\
     u(x,0)=n\widetilde{g}(x),
   \end{array}
 \right.
 \eee
 We can confine the above problems in the bounded domain $\mathcal
 {Q}_N$. Suppose $\widetilde{u}^N[\widetilde{g}]$ and
 $u^N[n\widetilde{g}]$ are the solutions of the
 following problems
  \be\label{eq:nug}
  \left\{
   \begin{array}{ll}
   \min\big\{\p_\theta \widetilde{u}^N[\widetilde{g}]- \mathcal{L}
     \widetilde{u}^N[\widetilde{g}]-\frac{1}{2}\gamma
     n(n-1)(1-\rho^2)c^2(\p_xu^N[\widetilde{g}])^2,\\
     \hspace{7cm}\;\widetilde{u}^N[\widetilde{g}]-n\widetilde{g}(x)\big\}=0,\;(x,\theta)\in\mathcal
 {Q}_N,
     \vspace{2mm}\\
     \p_x \widetilde{u}^N[\widetilde{g}](-N,\theta)=0,\;\p_x
     \widetilde{u}^N[\widetilde{g}](N,\theta)=ne^N,\;\theta\in(0,T],
     \vspace{2mm}\\
     \widetilde{u}^N[\widetilde{g}](x,0)=n\widetilde{g}(x),\;x\in(-N,N).
   \end{array}
 \right.
 \ee

  \be\label{eq:ung}
  \left\{
   \begin{array}{ll}
   \min\big\{\p_\theta u^N[n\widetilde{g}]- \mathcal{L}
     u^N[n\widetilde{g}],\;u^N[n\widetilde{g}]-n\widetilde{g}(x)\big\}=0,\;(x,\theta)\in\mathcal
 {Q}_N
     \vspace{2mm}\\
     \p_x u^N[n\widetilde{g}](-N,\theta)=0,\;\p_x
     u^N[n\widetilde{g}](N,\theta)=ne^N,\;\theta \in(0,T],
     \vspace{2mm}\\
     u^N[n\widetilde{g}](x,0)=n\widetilde{g}(x),\;x\in(-N,N).
   \end{array}
 \right.
 \ee
 Comparing \eqref{eq:nug} with \eqref{eq:ung}, letting $N\rightarrow\infty$, we obtain
 \eqref{eq:ug}.
  \end{proof}

\bibliographystyle{plain}

\def\cprime{$'$}

\end{document}